\documentclass[a4paper, 12pt]{article}
\usepackage[T1]{fontenc}
\usepackage[utf8]{inputenc}
\usepackage[english]{babel}
\usepackage[a4paper, margin=2.54cm]{geometry}
\usepackage{needspace}
\usepackage{libertine}
\usepackage{inconsolata}
\usepackage{amsmath,amsthm,amssymb}
\usepackage{amscd}
\usepackage{enumerate}
\usepackage{enumitem}
\usepackage{url}
\usepackage{stmaryrd}
\usepackage{graphicx}
\usepackage{xspace}
\usepackage{comment}
\usepackage{float}
\usepackage{cancel}
\usepackage{authblk}
\usepackage{esvect}
\usepackage{thmtools}
\usepackage{thm-restate}

\usepackage[margin=1.5cm]{caption}
\usepackage{xcolor}
\definecolor{refcolor}{RGB}{150,0,0}
\usepackage[colorlinks=true, allcolors=refcolor]{hyperref}

\usepackage{tikz,pgf}
\usetikzlibrary{positioning}
\usetikzlibrary{shapes.geometric}
\usetikzlibrary{arrows}
\usetikzlibrary{arrows.meta}
\usetikzlibrary{bbox}

\declaretheorem[name=Theorem, numberwithin=section]{theorem}
\declaretheorem[name=Theorem,unnumbered=true]{theorem*}
\declaretheorem[name=Lemma, sibling=theorem]{lemma}
\declaretheorem[name=Proposition, sibling=theorem]{proposition}
\declaretheorem[name=Definition, sibling=theorem]{definition}
\declaretheorem[name=Corollary, sibling=theorem]{corollary}
\declaretheorem[name=Question, sibling=theorem]{question}

\newcommand{\cuts}{\textup{\textsc{DisjointSeparators}}}
\newcommand{\sat}{\textup{\textsc{$3$-SAT}}}
\newcommand{\planarsat}{\textup{\textsc{Planar $3$-SAT}}}

\newcommand{\generalcuts}{\textup{\textsc{GeneralDisjointSeparators}}}

\newcommand{\NP}{{\sf NP}}
\newcommand{\Hex}{{\sc Hex}}
\newcommand{\dist}{\textup{dist}}
\newcommand{\cP}{\mathcal{P}}

\date{}

\renewenvironment{abstract}
{\small\vspace{-1em}
\begin{center}
\bfseries\abstractname\vspace{-.5em}\vspace{0pt}
\end{center}
\list{}{
\setlength{\leftmargin}{0.6in}%
\setlength{\rightmargin}{\leftmargin}}%
\item\relax}
{\endlist}

\tikzstyle{vertex_red}=[circle, inner sep=0, minimum size =4 pt, line width = 1pt, draw=red, fill=red]
\tikzstyle{vertex_blue}=[circle, inner sep=0, minimum size =4 pt, line width = 1pt, draw=blue, fill=blue]
\tikzstyle{vertex_black}=[circle, inner sep=0, minimum size =4 pt, line width = 1pt, draw=black, fill=black,text= white]
\tikzstyle{terminal_red}=[circle, inner sep=0, minimum size =6 pt, line width = 1pt, draw=red, fill=red]
\tikzstyle{terminal_blue}=[circle, inner sep=0, minimum size =6 pt, line width = 1pt, draw=blue, fill=blue]
\tikzstyle{vertex}=[circle,inner sep=0, minimum size =2.5 pt, line width = 1pt, draw=black, fill=black, text= white]

\title{The disjoint separators problem in graphs}

\author[1]{Thomas Delépine}
\author[2]{Florian Galliot}
\author[3]{Yannick Mogge}
\author[4]{Leandro Montero}
\author[5]{Nicolas Schivre}

\affil[1]{LIRMM, Université de Montpellier, CNRS, Montpellier, France}
\affil[2]{Aix-Marseille Université, CNRS, I2M, UMR 7373, Marseille, France}
\affil[3]{LIRIS, Université Claude Bernard Lyon 1, UMR5205, Villeurbanne, France}
\affil[4]{IMT Atlantique, LS2N - CNRS, La Chantrerie, Nantes, France}
\affil[5]{LIMOS, Université Clermont Auvergne, CNRS, Clermont-Ferrand, France}

\begin{document}

\renewcommand{\sectionautorefname}{Section}
\renewcommand{\subsectionautorefname}{Section}
\renewcommand{\subsubsectionautorefname}{Section}

\emergencystretch 2em

\maketitle

\begin{abstract}
We study the disjoint separators problem in graphs, an analogue of the famous disjoint paths problem. Given a graph $G$ and four pairwise disjoint subsets of vertices $S_r$, $T_r$, $S_b$, $T_b$, we ask whether there exist an $(S_r,T_r)$-separator and an $(S_b,T_b)$-separator which are disjoint. This is equivalent to coloring the vertices in red or blue, with $S_r \cup T_r$ in red and $S_b \cup T_b$ in blue, such that there is no red $(S_r,T_r)$-path and no blue $(S_b,T_b)$-path. On the one hand, we show that the disjoint separators problem is \NP-complete. We actually exhibit several \NP-complete restrictions of this problem, including planar graphs of bounded maximum degree, and graphs of bounded maximum degree when $|S_r|=|T_r|=|S_b|=|T_b|=1$. On the other hand, these hardness results turn out to be quite tight, as we provide a structural characterization and a polynomial-time algorithm for planar graphs when $|S_r|=|T_r|=|S_b|=|T_b|=1$. This has an interesting consequence about the popular board game \Hex: for the generalized game that may be played on any board, our result characterizes the planar boards on which draws are impossible, thus extending the well-known result about impossibility of draws on the standard commercialized board.
\end{abstract}

\section{Introduction}

\subsection{\Hex~and its generalization}

\Hex~is a classic two-player board game invented in 1942 \cite{politiken}. The board, shown in \autoref{fig:hexboard}(a), is a rhombus made of $11 \times 11$ hexagonal cells, with red borders on two opposite sides and blue borders on the other two sides. In turns, the players place a stone inside an unoccupied cell of their choice. The first player uses red stones, and she wins by connecting the two red borders with red stones, while the second player uses blue stones, and he wins by connecting the two blue borders with blue stones (where two cells are considered adjacent if they share a side). Equivalently, \Hex~can be seen as played on the graph shown in \autoref{fig:hexboard}(b). The vertices $s_r$ and $t_r$ are precolored in red, the vertices $s_b$ and $t_b$ are precolored in blue, and the players take turns coloring the other vertices. The first player colors vertices in red, and wins by getting a red $s_rt_r$-path, while the second player colors vertices in blue, and wins by getting a blue $s_bt_b$-path. The game \Hex~can be generalized to any graph with four designated terminals, two of them precolored in red and the other two in blue: the first player wants to build a path between the two red terminals, while the second player wants to build a path between the two blue terminals (if all vertices are colored without either thing happening, then the game is declared a draw). This game is part of a family of games which generalize {\em strong positional games} \cite{hefetz2014positional}, also called {\em Maker-Maker games}, to instances where the winning combinations are not necessarily the same for both players. This family has recently been introduced in \cite{EuroComb}: for example, if there is symmetry between the two colors (as is the case for \Hex), then a strategy-stealing argument ensures that the second player cannot have a winning strategy.

\begin{figure}[h]
    \centering
    \includegraphics[width=1\linewidth]{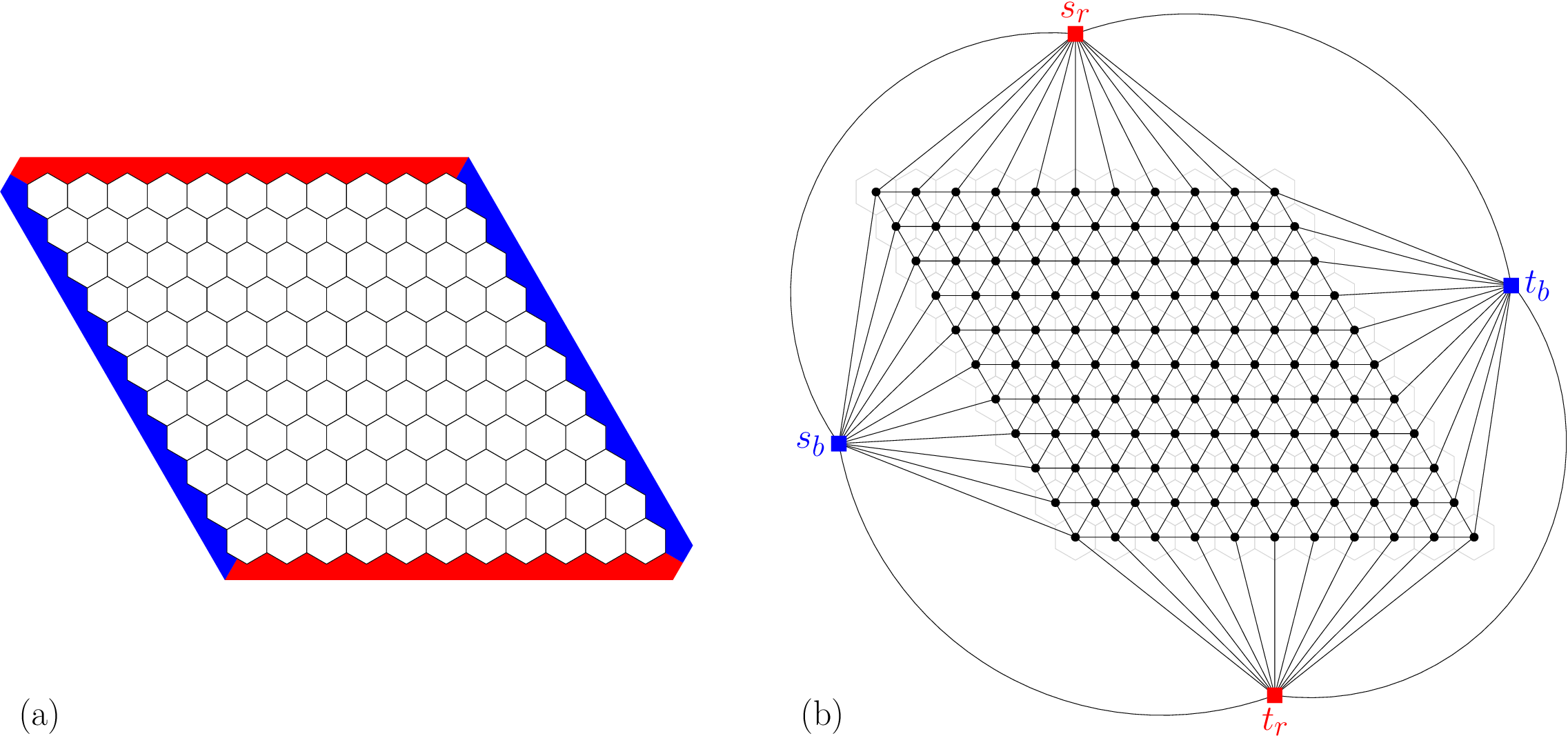}
    \caption{(a) The \Hex~board. (b) An alternative formulation of \Hex~as a coloring game connecting terminals in a graph. The four outside edges have only been added to fit the statement of the upcoming \autoref{thm:planar}.}
    \label{fig:hexboard}
\end{figure}

The original \Hex~game is sometimes presented with {\em Maker-Breaker} rules: the first player wins if she connects the two red borders with her red stones, while the second player wins if he prevents her from doing so (i.e., if she has not succedeed by the time all cells are occupied). Indeed, it turns out this apparent change of rules does not alter the game at all. This is the reason why, in the literature, \Hex~was first generalized as played on a graph with two red terminals (but no blue terminals), where the first player wins if she connects these two terminals and the second player wins if he prevents her from reaching that goal: this is the {\em Shannon switching game} played on vertices \cite{Shannon}. The fact that \Hex~is a Maker-Breaker game is an immediate consequence of two well-known facts about \Hex~\cite{Nash}:
\begin{enumerate}[label={(\arabic*)}]
	\item ``Winning is blocking'': given any red-blue coloring of the board, there cannot be both a red path between the two red terminals and a blue path between the two blue terminals.
	\item ``Blocking is winning'', i.e., draws are impossible: given any red-blue coloring of the board, there is either a red path between the two red terminals or a blue path between the two blue terminals.
\end{enumerate}
One can then wonder which structural features of the \Hex~board cause properties (1) and (2) above to hold. For instance, property (1) is implied by planarity and would hold just the same for any other planar board with opposite red/blue borders, whereas property (2) would fail if the cells were square rather than hexagonal for example (consider a checkerboard coloring of the square grid). For our generalized version of \Hex, played on arbitrary graphs with two red terminals and two blue terminals, can we get a structural characterization of properties (1) and (2) and/or algorithmic results on the problem of deciding whether these properties hold?

Both properties actually translate as natural problems on graphs. Property (1) corresponds to the well-known \textsc{DisjointPaths} problem: given a graph and two pairs of terminals $(s_r,t_r)$ and $(s_b,t_b)$, we want to know whether there exist an $s_rt_r$-path and an $s_bt_b$-path which are vertex-disjoint. This problem is known to be tractable \cite{Seymour1,Shiloach}, even when generalized to any fixed number of pairs of terminals \cite{Seymour2}. Property (2), on the other hand, corresponds to an analogue of the \textsc{DisjointPaths} problem about separators: given a graph and two pairs of terminals $(s_r,t_r)$ and $(s_b,t_b)$, we want to know whether there exist an $(s_r,t_r)$-separator and an $(s_b,t_b)$-separator (two sets of vertices whose removal disconnect the two corresponding terminals) which are disjoint. Indeed, seeing $s_r,t_r$ as precolored in red and $s_b,t_b$ as precolored in blue, two such separators exist if and only if there exists a red-blue coloring of the other vertices such that there is no red $s_rt_r$-path and no blue $s_bt_b$-path: the blue vertices then form an $(s_r,t_r)$-separator and the red vertices form an $(s_b,t_b)$-separator. This paper addresses this problem of finding disjoint separators which, to our knowledge, has not been studied before.

\subsection{The disjoint separators problem}

Given a (simple, finite, undirected) graph $G$ and disjoint subsets of vertices $S$ and $T$, an {\em $(S,T)$-separator} in $G$ is a set of vertices $X$ such that $X \cap (S \cup T) = \varnothing$ and whose removal disconnects $S$ from $T$, i.e., no connected component of $G-X$ intersects both $S$ and $T$. Obviously, an $(S,T)$-separator exists if and only if there is no edge between $S$ and $T$.

Let $G$ be a graph and let $S_r,T_r,S_b,T_b$ be four pairwise disjoint subsets of vertices. Vertices in $S_r \cup S_b$ (resp. in $T_r \cup T_b$) are called \emph{sources} (resp. \emph{targets}). More generally, vertices in $S_r \cup S_b \cup T_r \cup T_b$ are called {\em terminals}. We say that $(X,Y)$ is an {\em $(S_r,T_r,S_b,T_b)$-separator} if: $X$ is an $(S_r,T_r)$-separator, $Y$ is an $(S_b,T_b)$-separator, and $X \cap Y = \varnothing$. Note that, if such $(X,Y)$ exists, then we can choose it to partition the vertex set of $G$: indeed, we may simply add the vertices in $S_r \cup T_r$ to $Y$, those in $S_b \cup T_b$ to $X$, and the remaining vertices to any of the two arbitrarily.

This remark allows us to see our problem as a coloring problem, a viewpoint that we will use throughout the paper. The terminal vertices are precolored: those in $S_r \cup T_r$ in red, and those in $S_b \cup T_b$ in blue. We want to know if it is possible to color the non-terminal vertices using the colors red and blue, so that there is no red path between a vertex in $S_r$ and a vertex in $T_r$, and no blue path between a vertex in $S_b$ and a vertex in $T_b$. The set of red (resp. blue) vertices would then correspond to the separator $Y$ (resp. $X$).

\begin{definition}\label{def:cuts}
    The \generalcuts~decision problem is defined as follows. An instance is a tuple $(G,S_r,T_r,S_b,T_b)$ where $G$ is a graph and $S_r,T_r,S_b,T_b$ are pairwise disjoint subsets of vertices of $G$ such that there is no edge between $S_r$ and $T_r$ and no edge between $S_b$ and $T_b$. 
    The output is YES if and only if there exists an $(S_r,T_r,S_b,T_b)$-separator in $G$. Equivalently, the output is YES if and only if there exists a red-blue coloring of the vertices of $G$ such that: all vertices of $S_r \cup T_r$ are colored in red, all vertices of $S_b \cup T_b$ are colored in blue, and there is no monochromatic path between a source and a target of the same color.
\end{definition}

For convenience, we have specified in the previous definition that there is no edge between $S_r$ and $T_r$ and no edge between $S_b$ and $T_b$, as otherwise we would have a trivial NO-instance.

The subproblem where there is exactly one red source, one red target, one blue source and one blue target (as in the aforementioned generalization of \Hex) is of particular interest, hence the following definition.

\begin{definition}
    The \cuts~decision problem is the restriction of \generalcuts~to instances $(G,S_r,T_r,S_b,T_b)$ such that $|S_r|=|T_r|=|S_b|=|T_b|=1$.
\end{definition}

We will write instances of \cuts~as $(G,s_r,t_r,s_b,t_b)$, dropping the curly brackets for singletons to alleviate notations. Throughout the paper, $n$ denotes the number of vertices of the graph under consideration.

\subsection{Overview of the results}

In this work, we focus in particular on the planar case of \cuts. On the positive side, we show that \cuts~can be solved in polynomial time on planar graphs. For this, we first provide a structural characterization of YES-instances when the four terminals form a cycle, which yields a quadratic-time algorithm. This answers the question of impossibility of draws for \Hex~generalized to any planar board (if we do not require balance between both players' number of ``moves''). We then extend the results to arbitrary planar instances of \cuts~by reducing to the case where the four terminals form a cycle.

On the negative side, we prove that \generalcuts~is \NP-complete even when restricted to planar graphs of maximum degree at most $5$, and it remains so when adding the restriction that the number of terminals is sublinear in the order of the graph.  
We then derive several corollaries about the hardness of \cuts. We show that it is \NP-complete even when restricted to graphs of maximum degree at most 7, or to graphs where the removal of the four terminals yields a planar graph of maximum degree at most 3.

The paper is organized as follows. In \autoref{sec:prelim}, we prove several preliminary properties of \generalcuts, including simple reductions and structural lemmas. \autoref{sec:npc} is devoted to hardness results: we prove that \generalcuts~is \NP-complete, and derive several corollaries. 
In \autoref{sec:planar}, we study the planar case of \cuts~in detail and present a polynomial-time algorithm, based on a structural characterization of YES-instances when the four terminals form a cycle. 
Finally, in \autoref{sec:conclusions}, we discuss concluding remarks and directions for future research.

\section{Preliminaries}\label{sec:prelim}

\subsection{Some straightforward instances}

We start by noting that YES-instances of \generalcuts~are stable under taking subgraphs. We state the contrapositive for future reference.

\begin{proposition}\label{prop:subgraph}
    Let $(G,S_r,T_r,S_b,T_b)$ be an instance of \generalcuts, and let $G_0$ be a subgraph of $G$ whose vertex set includes $S_r \cup T_r \cup S_b \cup T_b$. If $(G_0,S_r,T_r,S_b,T_b)$ is a NO-instance of \generalcuts~(i.e., any red-blue coloring of the non-terminal vertices in $G_0$ yields a monochromatic source-to-target path), then $(G,S_r,T_r,S_b,T_b)$ also is a NO-instance.
\end{proposition}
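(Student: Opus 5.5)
We prove the contrapositive of the contrapositive: if $(G,S_r,T_r,S_b,T_b)$ is a YES-instance, then so is $(G_0,S_r,T_r,S_b,T_b)$. The idea is that a valid coloring of $G$ restricts to a valid coloring of $G_0$, because every path in $G_0$ is also a path in $G$.

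First we check that $(G_0,S_r,T_r,S_b,T_b)$ is an instance at all. The sets $S_r,T_r,S_b,T_b$ are pairwise disjoint subsets of $V(G_0)$ by hypothesis. Since $E(G_0)\subseteq E(G)$, there is no edge of $G_0$ between $S_r$ and $T_r$, and none between $S_b$ and $T_b$.

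Next, fix a red-blue coloring $c$ of $V(G)$ witnessing that $G$ is a YES-instance. So $S_r\cup T_r$ is red, $S_b\cup T_b$ is blue, and there is no red path from $S_r$ to $T_r$ and no blue path from $S_b$ to $T_b$. Let $c_0$ be the restriction of $c$ to $V(G_0)$. Because $G_0$ contains all terminals, $c_0$ respects the precoloring.

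Finally, suppose $G_0$ contained, under $c_0$, a monochromatic path from a source to a target of the same color. Its vertices and edges also lie in $G$, and its colors under $c$ are the same, so it would be a monochromatic path of the forbidden kind in $G$. This contradicts the choice of $c$. Hence $c_0$ witnesses that $G_0$ is a YES-instance, which proves the statement.

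No step is a genuine obstacle. The only point needing care is that the terminals lie in $G_0$, so that the restricted coloring still satisfies the precoloring constraints; this is exactly what the hypothesis guarantees. Equivalently, in separator language, if $(X,Y)$ is an $(S_r,T_r,S_b,T_b)$-separator in $G$, then $(X\cap V(G_0),\,Y\cap V(G_0))$ is one in $G_0$.
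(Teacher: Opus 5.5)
Your proof is correct and is essentially the paper's argument. The paper fixes an arbitrary coloring of $G$, restricts it to $G_0$ to obtain a forbidden monochromatic path there, and notes that this path also lies in $G$; that is the direct form of your restriction argument, and your extra check that $(G_0,S_r,T_r,S_b,T_b)$ is a valid instance is a harmless, slightly more careful addition.
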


\begin{proof}
    Consider a red-blue coloring of the non-terminal vertices in $G$. Since $(G_0,S_r,T_r,S_b,T_b)$ is a NO-instance, the subgraph induced by $G_0$ contains a red path between $S_r$ and $T_r$ or a blue path between $S_b$ and $T_b$. Hence, $G$ also contains such a path.
\end{proof}

In the following proposition, $\dist_G(A,B)$ refers to the length of a shortest path in $G$ between a vertex $a \in A$ and a vertex $b \in B$ (or $\infty$ if no such path exists, even though this case is not relevant to us).

\begin{proposition}\label{prop:distance}
	Let $(G,S_r,T_r,S_b,T_b)$ be an instance of \generalcuts. Suppose that $\dist_G(S_r,S_b) \geq 3$, or $\dist_G(S_r,T_b) \geq 3$, or $\dist_G(T_r,S_b) \geq 3$, or $\dist_G(T_r,T_b) \geq 3$. Then, $(G,S_r,T_r,S_b,T_b)$ is a YES-instance.
\end{proposition}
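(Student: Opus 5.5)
By the symmetry between sources and targets within each color, and between the two colors, it suffices to treat the case $\dist_G(S_r,S_b) \geq 3$. The other three cases follow by relabeling: swap $S_r$ with $T_r$, or $S_b$ with $T_b$, and the problem is unchanged.

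The plan is to write down an explicit coloring. Let $N$ be the set of non-terminal vertices having a neighbor in $S_r$. Color the vertices of $N$ blue and all other non-terminal vertices red; the terminals keep their precolors.

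First, there is no red $(S_r,T_r)$-path. Such a path starts at some $s \in S_r$ and ends in $T_r$, so it has at least one edge. Its second vertex $v$ is adjacent to $s$, and $v \notin T_r$ because there is no edge between $S_r$ and $T_r$. Hence either $v$ is a non-terminal in $N$, which is blue, or $v$ is a terminal in $S_b \cup T_b$, which is blue. The remaining possibility is $v \in S_r$. To handle it, look instead at the last vertex of the path lying in $S_r$, and apply the same argument to its successor. That successor is adjacent to a vertex of $S_r$, is not in $S_r$, and is not in $T_r$, so it is blue. This contradicts the path being red.

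Second, there is no blue $(S_b,T_b)$-path. Suppose $P$ is a blue path from $S_b$ to $T_b$. Every vertex of $P$ is either a terminal in $S_b \cup T_b$ or a non-terminal in $N$. Consider the first vertex of $P$ that is not in $S_b$; call it $w$, with predecessor $u \in S_b$. If $w \in T_b$, then $u$ and $w$ are adjacent, which is excluded by the definition of an instance. Otherwise $w \in N$, so $w$ has a neighbor in $S_r$. Then $\dist_G(S_r,S_b) \leq \dist_G(S_r,w)+\dist_G(w,u) \leq 2$, contradicting the hypothesis $\dist_G(S_r,S_b) \geq 3$.

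Both checks are short, so the only real care needed is in the bookkeeping: paths may pass through several terminals of the same set, which the "last vertex in $S_r$" and "first vertex not in $S_b$" choices handle. The symmetry reduction is also immediate, since every condition in the problem treats $S$ and $T$ symmetrically within a color.
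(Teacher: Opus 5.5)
Your proof is correct and matches the paper's: the paper uses the same coloring (non-terminal neighbours of $S_r$ blue, all other non-terminals red) and the same two checks, the second one using $\dist_G(S_r,S_b)\geq 3$. Your ``last vertex in $S_r$'' and ``first vertex not in $S_b$'' bookkeeping also makes rigorous a point the paper skips. The paper simply asserts that all neighbours of red sources are blue, which ignores neighbours lying in $S_r$ itself when $S_r$ is not a stable set.
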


\begin{proof}
	Assume that $\dist_G(S_r,S_b) \geq 3$ (the other cases are analogous). Then, we color the non-terminal vertices that are neighbors of a red source in blue, and the other non-terminal vertices in red. In this way, all neighbors of red sources are colored in blue, and, since $\dist_G(S_r,S_b) \geq 3$, all neighbors of blue sources are colored in red. This clearly ensures that there is no monochromatic path between a source and a target of the same color.
\end{proof}

\subsection{Useful reductions}

\begin{proposition}
    \generalcuts~admits a linear-time reduction to \cuts.
\end{proposition}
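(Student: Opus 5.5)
The natural reduction merges each terminal set into a single vertex. Given an instance $(G,S_r,T_r,S_b,T_b)$ of \generalcuts, build $G'$ from $G$ by adding four new vertices $s_r,t_r,s_b,t_b$. Make $s_r$ adjacent to every vertex of $S_r$, and similarly $t_r$ to $T_r$, $s_b$ to $S_b$, and $t_b$ to $T_b$. The vertices of $S_r\cup T_r\cup S_b\cup T_b$ stay in the graph but are no longer terminals. To keep their colors forced, I would attach them so that any valid coloring must give them the intended color.

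A cleaner alternative is to contract: replace each set $S_r$ by a single vertex $s_r$ adjacent to $N_G(S_r)\setminus S_r$, and do the same for the other three sets. Edges between two terminal sets become edges between the new terminals. The hypotheses give no edge $s_rt_r$ and no edge $s_bt_b$, so the result is a valid instance of \cuts. The reduction is clearly linear.

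For correctness, I would first check that a monochromatic red path in $G$ from $S_r$ to $T_r$ whose internal vertices are non-terminal corresponds exactly to a red $s_rt_r$-path in $G'$. The same holds for blue paths. The caveat is that a red path in $G$ may pass through vertices of $S_b$ or $T_b$, but those are blue, so it cannot.

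A red path may, however, pass through several vertices of $S_r$ and $T_r$ themselves. Any red path from $S_r$ to $T_r$ contains a subpath from $S_r$ to $T_r$ with no internal red terminal. Its internal vertices are then red non-terminals, because they cannot be blue terminals. So the existence of a monochromatic source-to-target path in $G$ under a coloring $c$ is equivalent to the existence of one in $G'$ under the induced coloring of non-terminals. Since the non-terminal vertex sets of $G$ and $G'$ coincide, colorings correspond bijectively, and YES-instances match.

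The only subtle point is handling edges inside a terminal set, such as $S_r$ with $S_r$, or across sets like $S_r$ with $S_b$. Edges inside a set simply vanish under contraction and are harmless. Edges across differently colored sets become edges between differently colored terminals, which never lie on a monochromatic path, so they may be kept or dropped. The argument must also make sure no new loops or multi-edges arise. Handling these with a quick case check is the main thing to verify, and the contraction keeps the construction linear in $|V|+|E|$.
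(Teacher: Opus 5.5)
Your proposal is correct, but the construction you actually carry out (the ``cleaner alternative'') is a different route from the paper's.

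\textbf{The paper's route.} The paper keeps all of $G$, with the old terminals now as non-terminal vertices. It adds four fresh terminals with forcing adjacencies: $s_r$ is joined to $S_r\cup S_b\cup T_b$, $t_r$ to $T_r\cup S_b\cup T_b$, $s_b$ to $S_b\cup S_r\cup T_r$, and $t_b$ to $T_b\cup S_r\cup T_r$. Then every $u\in S_r\cup T_r$ is forced red by the path $s_but_b$, and every $v\in S_b\cup T_b$ is forced blue by $s_rvt_r$. This is exactly the mechanism your first paragraph leaves unspecified.

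\textbf{Your route.} You instead identify each terminal set into a single vertex. Strictly speaking this is an identification rather than a contraction, since the sets need not be connected, but your argument never uses connectivity. On a red path, take the last $S_r$-vertex and then the first $T_r$-vertex after it. The subpath between them has an interior made only of red non-terminals. Such subpaths correspond exactly to red $s_rt_r$-paths in $G'$, because $s_r$ and $t_r$ can only be endpoints and $s_b,t_b$ are blue.

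\textbf{Comparison.} Your route needs no forcing and gives a literal bijection between colorings. It is also the identification the paper itself uses in Step (4) of \autoref{coro:cuts_npc_in_distance2planar_4}. It has the advantage that $G'-\{s_r,t_r,s_b,t_b\}=G-(S_r\cup T_r\cup S_b\cup T_b)$, so the degrees and structure of the non-terminal part are untouched. The paper's version instead keeps $G$ as an induced subgraph of the new graph, at the price of raising each old terminal's degree by three. Neither reduction preserves planarity in general.
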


\begin{proof}
    Let $(G,S_r,T_r,S_b,T_b)$ be an instance of \generalcuts. Let $G_0$ be the graph obtained from $G$ by adding four new vertices $s_r,t_r,s_b,t_b$ and adding: all edges between $\{s_r\}$ and $S_r \cup S_b \cup T_b$, all edges between $\{t_r\}$ and $T_r \cup S_b \cup T_b$, all edges between $\{s_b\}$ and $S_b \cup S_r \cup T_r$, and all edges between $\{t_b\}$ and $T_b \cup S_r \cup T_r$. 
    We claim that the instances $(G_0,s_r,t_r,s_b,t_b)$ and $(G,S_r,T_r,S_b,T_b)$ are equivalent. Indeed, in the former instance, every vertex $u \in S_r \cup T_r$ must be colored in red because of the path $s_but_b$, and every vertex $v \in S_b \cup T_b$ must be colored in blue because of the path $s_rvt_r$. From there, the existence of a red path between $s_r$ and $t_r$ (resp. between $s_b$ and $t_b$) is equivalent to the existence of a red path between $S_r$ and $T_r$ (resp. between $S_b$ and $T_b$).
\end{proof}

Note that the previous reduction does not necessarily preserve planarity, unlike the next result about edge contractions.

\begin{proposition}\label{prop:contraction}
	Let $(G,S_r,T_r,S_b,T_b)$ be an instance of \generalcuts. If two vertices $u,v \in S_r$ are adjacent, then, denoting by $G'$ the graph obtained through contracting the edge $uv$ and by $w$ the new contracted vertex, the instances $(G,S_r,T_r,S_b,T_b)$ and $(G',(S_r \setminus \{u,v\}) \cup \{w\},T_r,S_b,T_b)$ are equivalent. Analogous statements hold for $T_r$, $S_b$ and $T_b$. 
\end{proposition}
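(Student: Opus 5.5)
The plan is to exhibit an explicit bijection between admissible colorings of the two instances and check that it preserves the relevant monochromatic paths. Both instances have the same set of non-terminal vertices, namely $V(G) \setminus (S_r \cup T_r \cup S_b \cup T_b)$. Indeed, $w$ replaces two terminals $u,v \in S_r$ by a single terminal in $S_r$, and no non-terminal vertex is created or deleted. So a red-blue coloring of the non-terminal vertices of $G$ is literally the same object as one of $G'$. In both instances $u$, $v$ and $w$ are red. It therefore suffices to fix such a coloring $c$ and show two things: $G$ contains a red $(S_r,T_r)$-path under $c$ if and only if $G'$ contains a red path between $S_r' := (S_r\setminus\{u,v\})\cup\{w\}$ and $T_r$; and the same equivalence holds for blue $(S_b,T_b)$-paths. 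We should also note that the new instance is valid. The sets remain pairwise disjoint. There is no edge between $S_r'$ and $T_r$, since an edge $wx$ with $x\in T_r$ comes from an edge $ux$ or $vx$ in $G$. There is no edge between $S_b$ and $T_b$, since adjacency among vertices other than $u,v$ is unchanged by the contraction.

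For monochromatic paths, the cleanest formulation is in terms of connected components. Let $G_R$ (resp.\ $G'_R$) be the subgraph induced by red vertices. Contracting the edge $uv$, both of whose endpoints are red, commutes with taking the red induced subgraph: $G'_R$ is exactly $G_R/uv$. Contracting an edge of a graph merges its two endpoints into one vertex and leaves the partition into connected components otherwise unchanged. So the components of $G'_R$ correspond bijectively to those of $G_R$, with the component containing $u$ and $v$ mapped to the one containing $w$. A component of $G_R$ meets both $S_r$ and $T_r$ if and only if its image meets both $S_r'$ and $T_r$, because it contains $u$ or $v$ exactly when its image contains $w$. This gives the red equivalence.

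For blue paths, $u$ and $v$ are red, so the blue induced subgraph does not contain them, and the blue induced subgraph of $G'$ does not contain $w$. Deleting $u$ and $v$ from $G$ yields the same graph as deleting $w$ from $G'$. Hence the blue induced subgraphs of $G$ and $G'$ coincide, and the existence of a blue $(S_b,T_b)$-path is unchanged.

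The analogous statements follow by symmetry, swapping the roles of $S$ and $T$ and/or of the two colors. I do not expect a genuine obstacle. The only point needing care is the claim that contraction commutes with taking the red induced subgraph. A vertex $x$ is adjacent to $w$ in $G'$ exactly when it is adjacent to $u$ or $v$ in $G$, and this is unaffected by restricting to red vertices. This needs the fact that both $u$ and $v$ are red, which is precisely why the hypothesis requires $u,v$ to lie in the same terminal set.
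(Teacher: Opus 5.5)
Your proof is correct and is essentially the paper's argument: the two instances have the same non-terminal vertices, and for a fixed coloring, red source-to-target paths correspond across the contraction of the red edge $uv$. Your version is more explicit than the paper's one-line proof, which leaves implicit three points you verify: the component correspondence, the fact that blue paths are unaffected since $G-\{u,v\}=G'-w$, and the validity of the contracted instance.
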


\begin{proof}
	Note that the non-terminal vertices are the same for both instances. Given a red-blue coloring of the non-terminal vertices, we clearly have a red path between $w$ and some vertex in $T_r$ in the second instance if and only if we have a red path between $u$ or $v$ and some vertex in $T_r$ in the first instance.
\end{proof}

In particular, using \autoref{prop:contraction}, we can always reduce to the case where $S_r$, $T_r$, $S_b$ and $T_b$ are stable sets.

Finally, we show that \cuts~admits a reduction to the 2-connected case. Recall that a graph $G$ is called {\em $k$-vertex-connected} (or {\em $k$-connected} for short) if at least $k$ vertices must be removed from $G$ to disconnect it, and that a {\em cut vertex} is a vertex whose removal disconnects the graph.

\begin{proposition}\label{prop:2-connected}
    There is a quadratic-time algorithm which, given an instance $(G,s_r,t_r,s_b,t_b)$ of \cuts, either solves it or outputs an equivalent instance $(G_0,s'_r,t'_r,s'_b,t'_b)$ satisfying all the following properties:
    \begin{itemize}[nolistsep,noitemsep]
        \item $G_0$ is 2-connected;
        \item $G_0$ is an induced subgraph of $G$;
        \item If $s_rs_bt_rt_b$ is a cycle in $G$, then $(s'_r,t'_r,s'_b,t'_b)=(s_r,t_r,s_b,t_b)$.
    \end{itemize}
\end{proposition}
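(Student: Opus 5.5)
I would prove this by repeatedly shrinking the instance along the block–cut-vertex tree of $G$. Each step either decides the instance outright or passes to a block of $G$, with possibly new terminals. A single step costs linear time (Hopcroft–Tarjan) and the number of vertices strictly decreases, so the total time is quadratic.

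First, if the four terminals are not all in one connected component, the instance is solvable directly. Suppose some terminal, say $s_r$, lies in a component $C$ avoiding $t_r$. If $C$ avoids $s_b$ or $t_b$, color $C$ red and everything else blue. Then the only red vertices outside $C$ are $t_r$, and no blue path joins $s_b$ to $t_b$, since the blue vertices inside $C$ and outside $C$ cannot be connected. The remaining cases are symmetric or handled similarly. So we may assume $G$ is connected.

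Now suppose $G$ has a cut vertex. Consider the block tree and the minimal subtree $\mathcal{T}$ spanning the blocks that contain terminals. We split into cases on where the terminals sit.

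\textbf{Leaf blocks containing no terminal.} If a leaf block $B$ attached at cut vertex $c$ contains no terminal other than possibly $c$, then $B-c$ is irrelevant. Any source-to-target path entering $B-c$ must leave through $c$, so it cannot be a path. Coloring $B-c$ arbitrarily is therefore harmless, and we delete $B-c$. This gives an induced subgraph and preserves the equivalence.

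\textbf{All terminals in one block $B$.} Any monochromatic terminal path leaving $B$ through a cut vertex $c$ must return through $c$, so it is not a path. Hence we may restrict to $B$, which is 2-connected and induced. If $s_rs_bt_rt_b$ is a cycle, all four terminals lie in one block, so in this case the terminals are unchanged. This yields the third bullet.

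\textbf{Terminals spread over several blocks.} This is the main case. Pick a cut vertex $c$ separating $G$ into parts $A$ and $A'$ (both containing $c$) such that, say, $A'-c$ contains some terminals. Analyze by which terminals lie in $A'-c$.

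If a red terminal, say $t_r$, and a blue terminal lie on opposite sides of $c$ in an appropriate pattern, the instance is trivially YES. For example, if $s_r,t_r$ are separated by $c$, color $c$ blue. Then every $s_rt_r$-path is blocked, and it remains to color so that there is no blue $s_bt_b$-path. One option is to color everything else red, which fails only if $s_b,t_b$ are adjacent to each other or to $c$. Adjacency of $s_b$ and $t_b$ is excluded by the definition, so the cases needing care are when $c$ is adjacent to both. I expect a short case analysis here using \autoref{prop:distance} to settle these.

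If exactly one terminal, say $t_r$, lies in $A'-c$ and $c$ is not a terminal, then any red $s_rt_r$-path passes through $c$. Replacing $t_r$ by $c$ should give an equivalent instance on $A$. Given a solution on $A$ with $c$ red as the new target, extend it by coloring $A'-c-t_r$ blue; red paths from $c$ to $t_r$ are then impossible, or rather irrelevant, and blue paths through $A'$ must return via $c$. Conversely, from a solution on $G$: if $c$ is blue, no red $s_rt_r$-path exists even with $c$ as the target. If $c$ is red, then $c$ is not red-connected to $s_r$ within $A$. In both cases the restriction works, after recoloring $c$ red to obtain a valid precolored instance. Recoloring $c$ red cannot create a red path from $s_r$ to $c$ unless $s_r$ is red-adjacent to it, so this needs care. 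The natural fix is to check whether $s_r$ can reach $c$ in red; I expect this to be the delicate point. If $c$ is itself a terminal, we simply cut off $A'-c$ and contract.

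If the terminals of a single color class lie in $A'-c$ together with terminals of the other color, I expect similar replacements to apply, with $c$ taking over the role of the group of terminals beyond it, via the contraction of \autoref{prop:contraction}.

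\textbf{Main obstacle.} I expect the hardest part to be checking exhaustively that every distribution of the four terminals around a cut vertex is covered by one of the rules: either an explicit coloring, or a replacement of terminals by $c$ that keeps the new instance valid (no red–red source-target edge) and equivalent.
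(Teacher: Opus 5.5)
Your overall strategy matches the paper's: eliminate cut vertices one at a time, where each linear-time step either decides the instance or passes to a smaller induced subgraph. Your one-block restriction and your handling of the third bullet are fine.

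\textbf{The replacement rule is false.} The main case, with terminals spread over several blocks, is left as expectations, and the one concrete rule you state there does not hold. Replacing a lone terminal $t_r$ beyond a non-terminal cut vertex $c$ by $c$ is not an equivalence in general. Let $A$ be the $4$-cycle $s_rs_bct_b$ plus a vertex $x$ adjacent to all four, and attach $t_r$ through a path $c\,y\,t_r$. Coloring $y$ blue and $x,c$ red shows the original instance is YES. Yet $(G[A],s_r,c,s_b,t_b)$ is NO, since whichever color $x$ gets, $s_rxc$ or $s_bxt_b$ is monochromatic. The direction you flag as delicate is exactly where it breaks. ``Checking whether $s_r$ can reach $c$ in red'' is not a reduction, because it depends on the unknown coloring.

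\textbf{The missing idea is forcing.} The cut vertex $c$ may be promoted to a terminal only when its color is \emph{forced}. Work with the components of $G-c$, and suppose $t_r$ is alone in its component while $s_r,s_b,t_b$ share another one, $C$.
\begin{itemize}
\item If $c$ is not adjacent to $s_b$ or to $t_b$, color $c$ blue and everything else red: the instance is YES.
\item If $c$ is not adjacent to $t_r$, color the non-terminals of $t_r$'s component blue and everything else red: the instance is YES.
\item Only if $c$ is adjacent to $s_b$, $t_b$ and $t_r$ is $c$ forced red, and then it genuinely behaves like $t_r$. In that case $(G[C\cup\{c\}],s_r,c,s_b,t_b)$ is equivalent, or the answer is NO outright if $c$ is also adjacent to $s_r$.
\end{itemize}

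\textbf{The other configurations need no replacement.} When $c$ separates terminals in any other way (both pairs split, or no component of $G-c$ contains terminals of both colors), the answer is NO if $c$ is adjacent to all four terminals and YES otherwise, by explicit colorings. \autoref{prop:distance} cannot settle this, since all four distances may be at most $2$. Nor can $c$ ``take over'' a group of terminals of both colors.

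\textbf{A terminal cut vertex needs its own rule.} For $c=s_r$, let $C$ be the component of $G-s_r$ containing $t_r$. If $s_b,t_b\in C$, restrict to $G[C\cup\{s_r\}]$; otherwise answer YES. Your ``cut off $A'-c$'' can delete $t_r$.

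\textbf{Your disconnected-case coloring is also wrong in one subcase.} It fails when the component of $s_r$ avoids both $s_b$ and $t_b$: coloring everything else blue can create a blue $s_bt_b$-path. Instead, color each component's non-terminals blue if it contains both $s_r$ and $t_r$, and red otherwise.
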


\begin{proof}

    Consider an instance $(G,s_r,t_r,s_b,t_b)$. Recall that, by \autoref{def:cuts}, $s_r$ and $t_r$ are not adjacent, and neither are $s_b$ and $t_b$.
    
    Let us note that, if $s_r$, $t_r$, $s_b$ and $t_b$ do not all belong to the same connected component of $G$, then we have a YES-instance: indeed, for each connected component $C$, simply give all the non-terminal vertices of $C$ the same color, blue if $\{s_r,t_r\} \subseteq C$ or red otherwise.
    
    Therefore, we may assume that $G$ is connected, by only considering the connected component containing $s_r$, $t_r$, $s_b$ and $t_b$. We now explain how to eliminate each cut vertex in linear time, so that we get a 2-connected graph in quadratic time. Let $v$ be a cut vertex of $G$, and note that finding $v$ and computing the connected components of $G-v$ can easily be done in linear time.
    
    First, suppose that $v \in \{s_r,t_r,s_b,t_b\}$. Without loss of generality, assume $v=s_r$. Let $C$ be the connected component of $G-s_r$ that contains $t_r$. If $\{s_b,t_b\} \subseteq C$, then $(G[C \cup \{s_r\}],s_r,t_r,s_b,t_b)$ is an equivalent instance. 
    If $\{s_b,t_b\} \not\subseteq C$, then we have a YES-instance: simply color the non-terminal vertices of $C$ in blue and the other non-terminal vertices in red.

    We now suppose that $v \not\in \{s_r,t_r,s_b,t_b\}$. For $x \in \{s_r,t_r,s_b,t_b\}$, let $C(x)$ be the connected component of $G-v$ that contains $x$. Using the symmetries, we only need to consider the following four cases, with \autoref{fig:2-connectedness-reduction} providing some illustrations.
    
    \begin{enumerate}[label={\arabic*)}]
        \item First case: $C(s_r)=C(t_r)=C(s_b)=C(t_b)$.

        Clearly, $(G[C(s_r) \cup \{v\}],s_r,t_r,s_b,t_b)$ is an equivalent instance. 
        Note that we are always in this case if $s_rs_bt_rt_b$ is a cycle, so we do keep the same four terminals in that case as required by the statement of this proposition.

        \item Second case: $C(s_r) \neq C(t_r)$, and $C(s_b) \neq C(t_b)$.

        If $s_r$, $t_r$, $s_b$ and $t_b$ are all neighbors of $v$ in $G$, then we obviously have a NO-instance since there will be either a red path $s_rvt_r$ or a blue path $s_bvt_b$. Otherwise, we have a YES-instance: indeed, if $s_r$ or $t_r$ (resp. $s_b$ or $t_b$) is not a neighbor of $v$, simply color $v$ in red (resp. blue) and the other non-terminal vertices in blue (resp. red).

        \item Third case: $C(s_r)=C(t_r)$, $C(s_b) \neq C(s_r)$, and $C(t_b) \neq C(s_r)$.

        Again, we assume that $s_r$, $t_r$, $s_b$ and $t_b$ are not all neighbors of $v$ in $G$. We then have a YES-instance. Indeed, if $s_r$ or $t_r$ is not a neighbor of $v$, then color all the non-terminal vertices of $C(s_r)$ in blue and the other non-terminal vertices (this includes $v$) in red. Otherwise, if $s_b$ (resp. $t_b$) is not a neighbor of $v$, then color all the non-terminal vertices of $C(s_b)$ (resp. $C(t_b)$) in red and the other non-terminal vertices (this includes $v$) in blue.

        \item Fourth case: $C(s_r)=C(t_r)=C(t_b)$, and $C(s_b) \neq C(t_b)$.

        Again, we assume that $s_r$, $t_r$, $s_b$ and $t_b$ are not all neighbors of $v$ in $G$.
        
        \begin{enumerate}[label={\alph*)},nolistsep]
            \item If either $s_r$ or $t_r$ is not a neighbor of $v$, then we have a YES-instance: simply color $v$ in red and the other non-terminal vertices in blue.
            \item If $s_b$ is not a neighbor of $v$, then we also have a YES-instance: 
            color all non-terminal vertices of $C(s_b)$ in red and the other non-terminal vertices (this includes $v$) in blue.
            \item Finally, suppose that $s_r$, $t_r$ and $s_b$ are neighbors of $v$, but $t_b$ is not. Since $s_r$ and $t_r$ are neighbors of $v$, coloring $v$ in blue is forced. Since $s_b$ is a neighbor of $v$, for any coloring in which $v$ is blue, there is a blue $s_bt_b$-path in $G$ if and only if there is a blue $vt_b$-path in $G[C(t_b) \cup \{v\}]$. Therefore, $(G[C(t_b) \cup \{v\}],s_r,t_r,v,t_b)$ is an equivalent instance. \qedhere
        \end{enumerate}
    \end{enumerate}
\end{proof}

\begin{figure}
    \centering
    \includegraphics[width=1\linewidth]{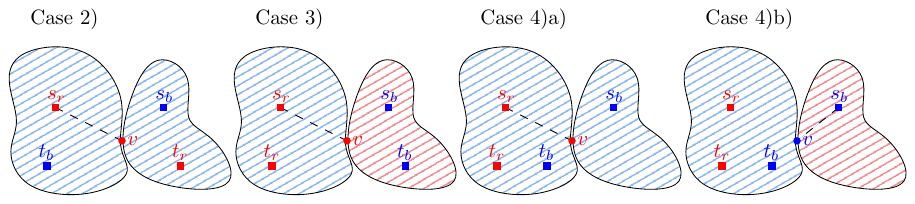}
    \caption{Illustration of Cases 2), 3), 4)a) and 4)b) from the proof of \autoref{prop:2-connected}. In Cases 2), 3) and 4)a), we assume without loss of generality that $s_{r} v$ is not an edge. In Case 4)b), we assume without loss of generality that $s_{b} v$ is not an edge.
    }
    \label{fig:2-connectedness-reduction}
\end{figure}

\section{\NP-completeness results}\label{sec:npc}

In this section, we will prove hardness results regarding \generalcuts~in planar graphs of bounded maximum degree and other variants. If $(G, S_r, T_r, S_b, T_b)$ is a YES-instance of \generalcuts, with $(X,Y)$ as a solution, then the description of $(X,Y)$ constitutes a certificate verifiable in polynomial time, hence the following proposition.

\begin{proposition}\label{prop:np}
    \generalcuts~is in \NP.
\end{proposition}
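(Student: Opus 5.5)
The plan is to use the pair $(X,Y)$ itself as the certificate, in its coloring form: a red-blue coloring of all vertices of $G$ in which every vertex of $S_r \cup T_r$ is red and every vertex of $S_b \cup T_b$ is blue. As remarked after the definition of $(S_r,T_r,S_b,T_b)$-separators, whenever a solution exists we may assume it partitions $V(G)$. The certificate is therefore a single color bit per vertex, which has size linear in $n$.

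The verifier carries out three checks, in this order.
\begin{enumerate}[label={(\roman*)}]
    \item It checks that the precolored terminals carry their prescribed colors. This takes linear time.
    \item It builds the subgraph $G_{\mathrm{red}}$ induced by the red vertices and runs a single breadth-first search from all of $S_r$ simultaneously (or adds a virtual super-source adjacent to $S_r$). It accepts this step iff no vertex of $T_r$ is reached.
    \item It performs the symmetric search in the subgraph $G_{\mathrm{blue}}$ induced by the blue vertices, starting from $S_b$, and checks that $T_b$ is not reached.
\end{enumerate}
Each search takes $O(n+|E(G)|)$ time, so the whole verification is polynomial.

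For correctness, a red path between $S_r$ and $T_r$ is exactly a path in $G_{\mathrm{red}}$, and likewise for blue. The verifier therefore accepts exactly the colorings that witness a YES-instance in the sense of the coloring formulation of \autoref{def:cuts}. Conversely, every YES-instance admits such a coloring, namely the blue set $X$ and the red set $Y$ of a partitioning separator pair.

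No step here is a genuine obstacle. The only point needing care is justifying that the certificate may be taken to be a full partition, so that the verifier never has to handle vertices outside $X \cup Y$. This is exactly the earlier remark, and alternatively the verifier could directly check that $X$ and $Y$ are disjoint separators by running one search in $G-X$ and one in $G-Y$.
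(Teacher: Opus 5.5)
Your proof is correct and follows the same route as the paper: the certificate is a red-blue coloring of the vertices, and the verifier runs graph searches in the red-induced and blue-induced subgraphs to confirm that no monochromatic source-to-target path exists. Your extra care about certificate size and about assuming a partition is sound, though the paper simply works directly with the coloring formulation of the definition.
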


\begin{proof}
A certificate is a red-blue coloring of the vertices. Verifying that no red path connects $S_r$ to $T_r$ and no blue path connects $S_b$ to $T_b$ can be done using standard graph search in polynomial time, so the problem is clearly in \NP.
\end{proof}

The certificate verification procedure runs in polynomial time independently of the sizes of $S_r$, $T_r$, $S_b$, and $T_b$, as well as of any structural restrictions imposed on $G$. Hence, \cuts~is also in \NP, and the same holds for all variants of \generalcuts~and \cuts~considered in this paper. The following proofs of \NP-completeness will therefore only consist in proofs of \NP-hardness.  

The proof of our first result is a reduction from \planarsat.
Let $\varphi$ be a CNF formula with all clauses of size $3$. Let $G_\varphi$ be the graph with vertex set 
\[\{v_x \mid x\textrm{ is a variable of } \varphi\} \cup \{v_C \mid C\textrm{ is a clause of } \varphi\}\]
and edge set 
\[\{v_xv_C \mid x \textrm{ is a variable appearing in the clause }C \textrm{ of }\varphi\}.\] 
The graph $G_\varphi$ is called the \emph{variable-clause incidence graph} of $\varphi$. It is well known that \planarsat~(the restriction of \sat~to instances whose variable-clause incidence graph is planar) is \NP-complete \cite{Lichtenstein1982}. In what follows, we reduce \planarsat~to \generalcuts, proving that our problem is \NP-hard even for planar instances of bounded maximum degree.

\begin{theorem}\label{thm:disjoint_cuts_NPC}
     \generalcuts~is \NP-complete even when restricted to planar graphs of maximum degree at most $5$. 
 \end{theorem}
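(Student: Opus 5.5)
The plan is a polynomial reduction from \planarsat. Membership in \NP{} is \autoref{prop:np}, so only hardness needs proof. Since \generalcuts{} allows arbitrarily many terminals, every gadget will get its own private terminal vertices. This keeps the construction planar and keeps all degrees small. The basic building block is an \emph{inequality gadget} on an edge $uv$ between two non-terminal vertices. We add four new terminals: $u$ receives a new red terminal and a new blue terminal, $v$ receives the matching red and blue terminals of opposite type (source versus target). If $u,v$ are both red, this gives a red source-to-target path of length $3$; if both are blue, a blue one. So $u$ and $v$ must receive different colors. Chaining inequality gadgets along a path $x^1x^2\cdots x^k$ forces alternating colors, which yields as many copies of a variable $x$ (and of $\neg x$) as it has occurrences. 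We read ``red'' as true. The chain is laid out along the edges of a planar embedding of the incidence graph $G_\varphi$, so planarity is preserved.

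For each clause $C$ with literals $\ell_1,\ell_2,\ell_3$, pick the copy vertices $a,b,c$ representing these literals, placed in the embedding near $v_C$, and add the path $a\,b\,c$. No new terminals are added for the clause. Instead, we orient the inequality gadgets so that among the blue terminals already attached to $a,b,c$, $a$ sees a blue target and $c$ sees a blue source. Then if $a,b,c$ are all blue, that is, all three literals false, the path source--$c$--$b$--$a$--target is a blue violating path. Conversely, given a satisfying assignment, color each copy vertex by the truth value of its literal. We then have to check that no monochromatic source-to-target path exists. Degree count: a copy vertex has at most two chain neighbours, at most two clause-path neighbours, and a bounded number of private terminal neighbours. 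Since each vertex gets only its own side's terminals from its incident inequality gadgets, the degree stays at most $5$, provided each interior chain vertex is reused carefully: we use end vertices or subdivided chains as clause literal vertices. Private terminals have degree $1$.

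The main obstacle is the soundness direction, showing there is no \emph{leakage}. Since terminals of different gadgets share the same colour class, a monochromatic path could start at a terminal of one gadget, travel through non-terminal vertices, and end at a target of another gadget. I would handle this by analysing monochromatic components in the coloring induced by a satisfying assignment. Along every chain, colors alternate, so a monochromatic component of non-terminal vertices consists only of a copy vertex together with its clause-path neighbours of the same color. Such a component has size at most $3$ and lies inside a single clause path. It therefore suffices to list, for each clause path and each of the $7$ allowed color patterns, the terminals adjacent to each monochromatic sub-path, and to arrange the source/target orientation of the gadgets so that a sub-path containing both a source and a target of the same color occurs only when the whole path $abc$ is blue. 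If the orientation constraints at a vertex conflict, I would first try subdividing the chain with an extra pair of inequality gadgets, which preserves the truth value and gives a fresh vertex whose terminal types can be chosen freely. Finally, I would check that the construction is polynomial and that the maximum degree is $5$ after these subdivisions.
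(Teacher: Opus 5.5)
\textbf{There is a genuine gap: the clause gadget cannot work with any choice of orientations.}

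In your construction the middle literal vertex $b$ is itself a copy vertex. So it is an endpoint of an inequality gadget, and hence adjacent to a blue terminal of some type.
\begin{itemize}
\item If that terminal is a target, take any assignment with $a$ true and $b,c$ false. It satisfies the clause, yet it colours $b,c$ blue. The blue source at $c$, then $c$, $b$, then the blue target at $b$ form a violating path.
\item If it is a source, the assignment with $c$ true and $a,b$ false gives a violating blue path through $a,b$.
\end{itemize}
So your gadget enforces the $2$-clause $b\vee c$ (or $a\vee b$) rather than $a\vee b\vee c$, and the completeness direction fails. Subdividing chains to re-choose types does not help, since both possible types for $b$ fail.

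The obstruction is structural. In your design every non-terminal vertex has its colour fixed by the assignment. Every copy vertex sees a terminal of each colour, so any violation on a monochromatic sub-path of $a\,b\,c$ is already a violation on one of its edges. Hence all constraints are binary and your instances encode $2$-SAT. No variant of this design can be a correct reduction from \planarsat{} unless $\mathsf{P}=\mathsf{NP}$.

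\textbf{Missing idea.} The clause gadget needs auxiliary non-terminal vertices whose colours are \emph{not} dictated by the assignment, so that the path detecting ``all literals false'' runs through vertices decoupled from the literal copies. This is what the paper does, with blue meaning true:
\begin{itemize}
\item each literal vertex $\ell_i$ has a new neighbour $\ell'_i$;
\item $\ell_i$ and $\ell'_i$ see red terminals of opposite types, so a false (red) literal forces $\ell'_i$ to be blue;
\item $\ell'_1,\ell'_2,\ell'_3$ connect a blue source to a blue target inside the gadget, so three false literals give a blue violating path;
\item when some literal is true, the free vertices $\ell'_i$ are coloured by a case table so that no violation appears.
\end{itemize}

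\textbf{Degree count.} Separately, this needs repair. As written, an interior chain vertex belongs to two inequality gadgets and gets four terminal neighbours, so it has degree $6$ before any clause edge. You should give each copy vertex exactly one red and one blue terminal, with types alternating along the variable structure. You should also place the literal vertices at the leaves of a tree-shaped variable gadget. With both changes the maximum degree stays at most $5$.
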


 \begin{proof} 
    We perform a reduction from \planarsat. Let $\varphi$ be an instance of \planarsat, without any clause containing both a variable and its negation since that would be a trivial clause. We assume each clause $C = (\ell_1, \ell_2, \ell_3)$ in $\varphi$ to be ordered with the positive literals to the left of the negative literals, i.e., either $C$ is monotone or $\ell_1$ is positive and $\ell_3$ is negative. We fix a planar embedding of the variable-clause incidence graph $G_\varphi$. From $\varphi$ and $G_\varphi$, we now construct an instance $(G, S_r, T_r, S_b, T_b)$ of \generalcuts~in three steps. The reader can find an illustration of the whole reduction in \autoref{fig:example_reduction}. 
     
     \begin{enumerate}[label={(\arabic*)}]

     \item For every variable $x$ appearing in $\varphi$, let $k$ be the number of clauses of $\varphi$ containing $x$. We add a copy of $\textrm{VAR}_x^k$ from \autoref{fig:var_gadget} to $G$.

     \begin{figure}[h]
        \centering
        \includegraphics[scale=1]{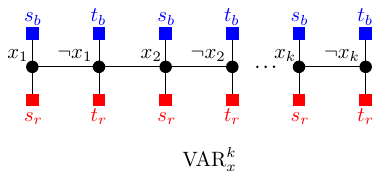}
        \caption{The variable gadget used in the proof of \autoref{thm:disjoint_cuts_NPC}. Vertices in $S_r$ (resp. $T_r$, $S_b$, $T_b$) are generically labeled $s_r$ (resp. $t_r$, $s_b$, $t_b$) in all figures.
        }
        \label{fig:var_gadget}
    \end{figure}
    
    \item Then, for every clause $C = (\ell_1, \ell_2, \ell_3)$ in $\varphi$, we add a copy of $\textrm{CL}_{(\ell_1, \ell_2, \ell_3)}$ from \autoref{fig:cla_gadget} to $G$. 

    \begin{figure}[h]
         \centering
         \includegraphics[scale=1]{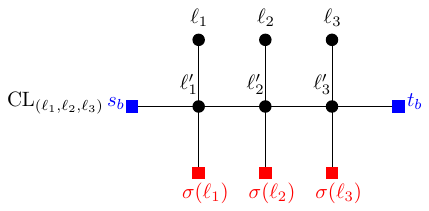}
         \caption{The clause gadget used in the proof of \autoref{thm:disjoint_cuts_NPC}. The label $\sigma(\ell_i)$ is defined as $t_r$ if $\ell_i$ is a positive literal or $s_r$ otherwise. Recall that the clause $(\ell_1, \ell_2, \ell_3)$ is ordered with the positive literals to the left of the negative literals.}
         \label{fig:cla_gadget}
    \end{figure}
    
    \item Finally, for every variable $x$ of $\varphi$, let $v_{C_1},\dots,v_{C_k}$ be the neighbors of $v_x$ in $G_\varphi$ respecting the circular ordering given by the planar embedding of $G_\varphi$ around $v_x$ going clockwise. For every $i \in \{1,\dots,k\}$, let $\ell_j$ be the unique literal in $C_i$ such that $\ell_j \in \{x, \neg x\}$. If $\ell_j = x$, then we identify the vertices $\ell_j$ from $\textrm{CL}_{(\ell_1, \ell_2, \ell_3)}$ and $x_i$ from $\textrm{VAR}_x^k$. If $\ell_j = \neg x$, then we identify the vertices $\ell_j$ from $\textrm{CL}_{(\ell_1, \ell_2, \ell_3)}$ and $\neg x_i$ from $\textrm{VAR}_x^k$.

    \end{enumerate}

    \begin{figure}[H]
    \centering
    \includegraphics[width=1\linewidth]{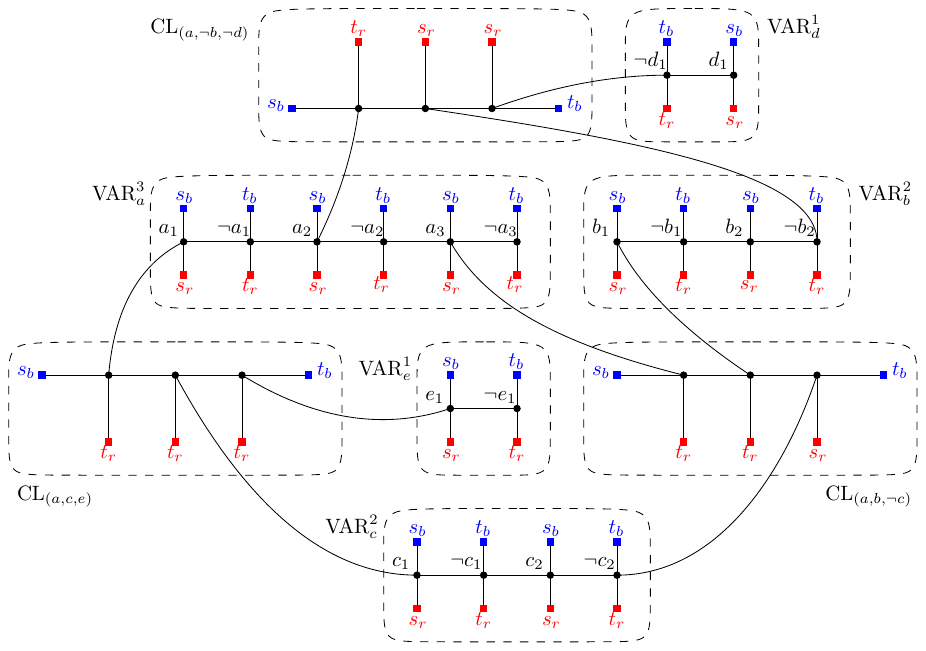}
    \caption{The instance of \generalcuts~constructed by the reduction used in the proof of \autoref{thm:disjoint_cuts_NPC} from the instance $\varphi = (a\vee b\vee \neg c) \wedge (a \vee \neg b \vee \neg d) \wedge (a \vee c \vee e)$ of \planarsat.
    }
    \label{fig:example_reduction}
\end{figure}

    Clearly, $G$ has maximum degree at most 5. It remains to prove that $G$ is planar. With Step $(3)$ from the construction, and since each variable gadget is a tree, no crossing is created when replacing each vertex $v_x$ by its variable gadget $\textrm{VAR}_x^k$. As for the clause gadgets, we provide an explicit planar embedding. Let $C = (\ell_1, \ell_2, \ell_3)$ be a clause of $\varphi$. Since $v_C$ is of degree $3$, there are only two possible circular orderings for $\{\ell_1, \ell_2, \ell_3\}$: the clockwise ordering, or the anti-clockwise ordering. For both cases, the planar embedding is depicted in \autoref{fig:planar_embeddingV2}.
    
    \begin{figure}[ht]
        \centering
        \includegraphics[width=0.8\linewidth]{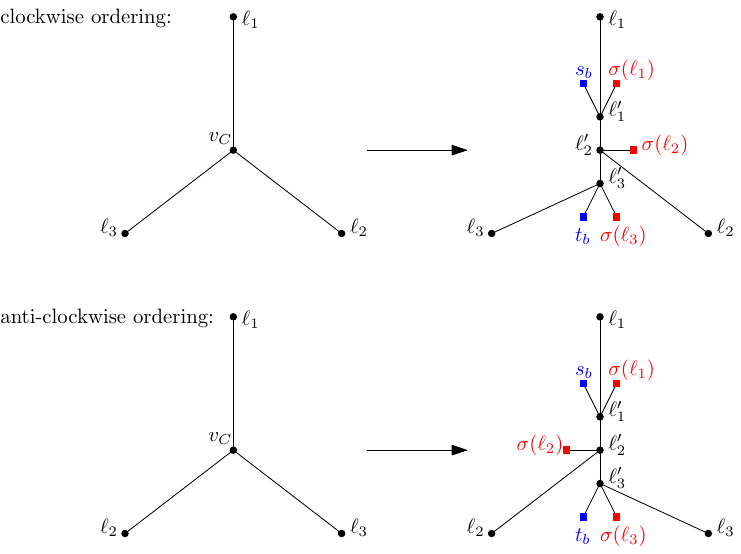}
        \caption{Planar embedding of $\textrm{CL}_{(\ell_1, \ell_2, \ell_3)}$ depending on the cyclic ordering around $v_C$ in $G$, where $C = (\ell_1, \ell_2, \ell_3)$ is an ordered clause.}
        \label{fig:planar_embeddingV2}
    \end{figure}
        
    We now claim that $\varphi$ is satisfiable if and only if there exists an $(S_r,T_r, S_b, T_b)$-separator in $G$, i.e., if and only if there exists a red-blue coloring of the non-terminal vertices in $G$ such that there is no monochromatic path between a source and a target of the same color.
     \begin{itemize}
          \item First, suppose that $\varphi$ is satisfiable. We fix an assignment of truth values to the variables which satisfies $\varphi$. Then, for each variable $x$ of $\varphi$, we color $x_1, \dots, x_k$ from the variable gadget $\textrm{VAR}_x^k$ in blue if $x$ is true or in red if $x$ is false, and we color $\neg x_1, \dots, \neg x_k$ with the opposite color. One can verify that this coloring creates no monochromatic path between a source and a target in the variable gadgets. Also note that, as a consequence, for each clause $C = (\ell_1, \ell_2, \ell_3)$ and each $i \in \{1,2,3\}$, the vertex $\ell_i$ from the clause gadget $\textrm{CL}_{(\ell_1, \ell_2, \ell_3)}$ is now colored in blue if and only if the literal $\ell_i$ is true (in particular, at least one of $\ell_1$, $\ell_2$ or $\ell_3$ is blue). On the other hand, the vertices $\ell'_1$, $\ell'_2$ and $\ell'_3$ are still uncolored. To color them, we follow the rules provided by \autoref{tab:coloring_rules}. It can easily be checked that this creates no monochromatic path between a source and a target in the clause gadgets. 
     \begin{table}[H]
             \centering
             \begin{tabular}{|c|c|c|c|c|c|c|l}
                    \cline{1-7} $\ell_1$ & $\ell_2$ & $\ell_3$ &  & $\ell_1'$ & $\ell_2'$ & $\ell_3'$ & \\
                 \cline{1-7} \textcolor{blue}{blue} & \textcolor{blue}{blue} & \textcolor{blue}{blue} & $\rightarrow$ & \textcolor{red}{red} & \textcolor{blue}{blue} & \textcolor{red}{red} & \\
                 \cline{1-7} \textcolor{red}{red} & \textcolor{blue}{blue} & \textcolor{blue}{blue} & $\rightarrow$ & \textcolor{blue}{blue} & \textcolor{red}{red} & \textcolor{red}{red} & if $C$ is monotone \\
                 \textcolor{red}{red} & \textcolor{blue}{blue} & \textcolor{blue}{blue} & $\rightarrow$ & \textcolor{blue}{blue} & \textcolor{red}{red} & \textcolor{blue}{blue} & if $C$ is not monotone \\
                 \cline{1-7} \textcolor{blue}{blue} & \textcolor{red}{red} & \textcolor{blue}{blue} & $\rightarrow$ & \textcolor{red}{red} & \textcolor{blue}{blue} &  \textcolor{red}{red} &  \\
                 \cline{1-7} \textcolor{blue}{blue} & \textcolor{blue}{blue} & \textcolor{red}{red} & $\rightarrow$ & \textcolor{red}{red} & \textcolor{red}{red} & \textcolor{blue}{blue} & if $C$ is monotone \\
                 \textcolor{blue}{blue} & \textcolor{blue}{blue} & \textcolor{red}{red} & $\rightarrow$ & \textcolor{blue}{blue} & \textcolor{red}{red} & \textcolor{blue}{blue} & if $C$ is not monotone \\
                 \cline{1-7} \textcolor{red}{red} & \textcolor{red}{red} & \textcolor{blue}{blue} & $\rightarrow$ & \textcolor{blue}{blue} & \textcolor{blue}{blue} & \textcolor{red}{red} &  \\
                 \cline{1-7} \textcolor{red}{red} & \textcolor{blue}{blue} & \textcolor{red}{red} & $\rightarrow$ & \textcolor{blue}{blue} & \textcolor{red}{red} & \textcolor{blue}{blue} &  \\
                 \cline{1-7} \textcolor{blue}{blue} & \textcolor{red}{red} & \textcolor{red}{red} & $\rightarrow$ & \textcolor{red}{red} & \textcolor{blue}{blue} & \textcolor{blue}{blue} &  \\
                 \cline{1-7}
             \end{tabular}
             \caption{The rules used to color $\ell_1'$, $\ell_2'$ and $\ell_3'$ for every (ordered) clause $C = (\ell_1, \ell_2, \ell_3)$.}
             \label{tab:coloring_rules}
         \end{table}
         \item Conversely, suppose that there exists a red-blue coloring of the non-terminal vertices that creates no monochromatic path between a source and a target. 
         Now, in any variable gadget $\textrm{VAR}_x^k$, the vertices $x_1, \dots, x_k$ must share the same color and the vertices $\neg x_1, \dots, \neg x_k$ must share the other color, otherwise a monochromatic source-to-target path would be created. This observation allows us to construct an assignment of truth values to the variables of $\varphi$ by setting $x$ to be true if and only if the vertices $x_1, \dots, x_k$ are colored in blue. 
         Suppose for a contradiction that $\varphi$ is not satisfied, i.e., there exists a clause $C = (\ell_1, \ell_2, \ell_3)$ containing only false literals. This means that the vertices $\ell_1$, $\ell_2$ and $\ell_3$ are colored in red. Recall that, by definition of $\sigma$ in the construction of $\textrm{CL}_{(\ell_1, \ell_2, \ell_3)}$, we have that for every $i \in \{1, 2, 3\}$: if $\ell_i$ is a neighbor of a vertex in $S_r$ then $\ell_i'$ is a neighbor of a vertex in $T_r$, and if $\ell_i$ is a neighbor of a vertex in $T_r$ then $\ell_i'$ is a neighbor of a vertex in $S_r$. Therefore, if there exists $i \in \{1, 2, 3\}$ such that $\ell'_i$ is colored in red, then a red path connects a red source to a red target across the clause gadget $\textrm{CL}_{(\ell_1, \ell_2, \ell_3)}$ and the variable gadget $\textrm{VAR}_x^k$ (where $x$ is the unique variable such that $\ell_i \in \{x,\neg x\}$), a contradiction. 
         Thus, the three vertices $\ell_1'$, $\ell_2'$ and $\ell_3'$ must be colored in blue. However, they then connect the blue source and the blue target in the clause gadget $\textrm{CL}_{(\ell_1, \ell_2, \ell_3)}$, which is a contradiction as well.
    \end{itemize}
    This equivalence proves that \generalcuts~is \NP-hard, even when restricted to planar graphs of maximum degree at most $5$. Since \generalcuts~is in \NP~by \autoref{prop:np}, we get the desired \NP-completeness result.
 \end{proof}

From \autoref{thm:disjoint_cuts_NPC} and its proof, we can deduce the following corollaries on the problem with only four terminals, namely, \cuts.

\begin{corollary}\label{coro:cuts_npc_in_distance2planar_4}
    \cuts~is \NP-complete even when restricted to instances $(G, s_r, t_r, s_b, t_b)$ where the graph $G -\{s_r, t_r, s_b, t_b\}$ is planar and of maximum degree at most $3$.
\end{corollary}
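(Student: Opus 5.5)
The natural route is to compose the reduction from the proof of \autoref{thm:disjoint_cuts_NPC} with the terminal-merging reduction from \generalcuts~to \cuts. I would not use the generic reduction with four new universal vertices directly. That construction is valid, but it makes $s_r$ adjacent to $S_b\cup T_b$ as well, and it is not clear the remaining graph has the claimed structure. Instead, I would identify the terminals.

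Start from the instance $(G,S_r,T_r,S_b,T_b)$ built from a \planarsat~formula $\varphi$ in the proof of \autoref{thm:disjoint_cuts_NPC}. Identify all vertices of $S_r$ into a single vertex $s_r$, and similarly obtain $t_r$, $s_b$ and $t_b$. Each new vertex is adjacent to the union of the neighborhoods of the merged terminals. By \autoref{def:cuts} there are no edges between $S_r$ and $T_r$, nor between $S_b$ and $T_b$; with the gadgets' terminals pairwise nonadjacent, the resulting graph $G'$ is simple after deleting loops and parallel edges.

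The key equivalence is that $(G,S_r,T_r,S_b,T_b)$ and $(G',s_r,t_r,s_b,t_b)$ have the same non-terminal vertices. For any red-blue coloring of these vertices:
\begin{itemize}
\item a red $s_rt_r$-path in $G'$ has internal vertices that are red non-terminals, so it lifts to a red path in $G$ from some vertex of $S_r$ to some vertex of $T_r$;
\item conversely, a red $S_r$--$T_r$ path in $G$ contains a red sub-path with no internal terminals. Red terminals are only endpoints of the relevant kind, and a path passing through an $S_r$ vertex can be shortened to its last visit. This sub-path becomes an $s_rt_r$-path in $G'$;
\item the blue case is identical.
\end{itemize}
This is exactly the argument of \autoref{prop:contraction}, extended from contracting an edge to identifying nonadjacent same-type terminals.

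The structural claim follows because $G'-\{s_r,t_r,s_b,t_b\}$ equals $G$ minus all terminals. It is therefore a subgraph of a planar graph, hence planar. For the degree bound, I will inspect the gadgets of \autoref{fig:var_gadget} and \autoref{fig:cla_gadget}. Each non-terminal vertex ($x_i$, $\neg x_i$, $\ell_i$, $\ell'_i$, and the internal path vertices) has at most three non-terminal neighbours. For example, a literal vertex $x_i$ is adjacent to its neighbours in the variable tree and to its attachment in the clause gadget, with the rest of its degree going to terminals.

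The main obstacle is this degree-3 check, since it depends on figure details not reproduced in the text. If some vertex has four non-terminal neighbours, I would subdivide the offending non-terminal edge with a vertex adjacent to a terminal pair, which forces its color. Alternatively, I would replace a degree-4 vertex by two adjacent vertices, each additionally adjacent to terminals so that they are forced to share a color. Either local modification keeps the instance equivalent and preserves planarity of the non-terminal part. Polynomiality of the reduction and membership in \NP~(\autoref{prop:np}) then give \NP-completeness.
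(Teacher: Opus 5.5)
Your proposal is correct and follows the paper's proof, which likewise adds a step identifying all vertices of $S_r$ (resp.\ $T_r$, $S_b$, $T_b$) into a single terminal and says the remaining arguments carry over. Your path-lifting argument and the observation that $G'-\{s_r,t_r,s_b,t_b\}$ is exactly $G$ minus its terminals make explicit what the paper leaves implicit, and the degree-$3$ bound is read off the gadgets just as in the paper. You therefore do not need the contingency fixes, and they would not work as stated: subdividing an edge leaves the endpoint's non-terminal degree unchanged, and giving the split vertices terminal neighbours creates new source-to-target paths rather than forcing them to agree.
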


\begin{proof}
    We adapt the proof of \autoref{thm:disjoint_cuts_NPC}, using the same construction with the addition of a step $(4)$ consisting in identifying every vertex in $S_r$ (resp. $T_r$, $S_b$, $T_b$) into a single vertex $s_r$ (resp. $t_r$, $s_b$, $t_b$). The remainder of the proof follows the same arguments.
\end{proof}

\begin{corollary}\label{coro:npc}
    \cuts~is \NP-complete even when restricted to graphs of maximum degree at most $7$. 
\end{corollary}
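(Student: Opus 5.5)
We start from the construction in the proof of \autoref{thm:disjoint_cuts_NPC}, which gives planar instances of \generalcuts~of maximum degree at most $5$. In the proof of \autoref{coro:cuts_npc_in_distance2planar_4}, all of $S_r$ (resp. $T_r$, $S_b$, $T_b$) was identified into a single vertex. That collapse creates four terminals of unbounded degree, so we instead replace each merged terminal by a bounded-degree ``terminal tree''. Concretely, for $S_r$ we take a binary tree whose leaves are the former vertices of $S_r$ and whose root is $s_r$, and we force every internal vertex of the tree to be red. Analogous trees are built for $T_r$, $S_b$ and $T_b$, with the blue trees forced blue. If every internal tree vertex is red, then a red path from $s_r$ to $t_r$ exists if and only if a red path exists between some vertex of $S_r$ and some vertex of $T_r$ in the original graph. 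Hence the instances are equivalent, exactly as in \autoref{coro:cuts_npc_in_distance2planar_4}.

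The forcing uses the trick from the reduction of \generalcuts~to \cuts. A non-terminal vertex $u$ adjacent to both $s_b$ and $t_b$ must be red, because of the path $s_but_b$. So we would like every internal red-tree vertex to be adjacent to $s_b$ and $t_b$, but that again makes $s_b,t_b$ high-degree. Instead we force colors by propagation. We build a chain of forced vertices, where each forced vertex is adjacent to the terminal of the opposite color through a bounded-degree structure. A cleaner approach is to attach each internal vertex $u$ of the red tree to its own private pair of vertices $b_u, b'_u$ that are themselves forced blue and lie in the blue trees. Then $u$ adjacent to a vertex of the blue source tree and a vertex of the blue target tree forces $u$ red, since otherwise a blue path from $s_b$ through the blue source tree, $u$, and the blue target tree to $t_b$ appears. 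Symmetrically, blue-tree vertices are forced blue by adjacency into both red trees.

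This is circular: red-tree vertices are forced by blue-tree vertices being blue, and vice versa. We therefore argue jointly, by induction from the roots. The roots $s_r,t_r,s_b,t_b$ are precolored. We arrange the four trees level by level so that each vertex at depth $d+1$ in a red tree is adjacent to a depth-$\le d$ vertex in each of the two blue trees, and vice versa. Assume all vertices of depth $\le d$ are correctly colored. A depth-$(d+1)$ vertex of the red source tree is adjacent to its red parent and to blue vertices of both blue trees, whose tree paths to $s_b$ and $t_b$ are already blue. If it were blue, it would close a blue $s_bt_b$-path, so it must be red. Completeness of the equivalence then follows as before. In the YES direction, we color the trees as forced; a red-tree vertex links only red-source-side or red-target-side material and cannot create a red $s_rt_r$-path not already present between $S_r$ and $T_r$, and likewise for blue.

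The main obstacle is degree bookkeeping to reach exactly $7$. Each original terminal has degree at most $5$ in $G$ by \autoref{thm:disjoint_cuts_NPC}, and it gains one edge to its tree parent. Each internal tree vertex has a parent, two children, and cross edges into the two opposite-color trees. Each tree vertex must also receive cross edges from children-level vertices of the other trees. To keep the maximum degree at $7$, we would use binary trees and balance the cross edges, letting each vertex receive at most two incoming cross edges. We would also check that a tree leaf, which is an original terminal, need not be forced by cross edges, since it is precolored anyway; this lets its degree stay at most $6$. Planarity is not required here, so crossings between the trees are harmless. The delicate part is confirming that the cross edges do not create unintended monochromatic paths in the YES direction. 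This holds because cross edges always join vertices of different colors under the forced coloring.
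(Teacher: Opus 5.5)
Your forcing mechanism is the same one the paper uses in its chained copy gadget: a vertex whose neighbors are blue-connected to $s_b$ and to $t_b$ must be red, and it is then red-connected to $s_r$. However, your construction has a genuine gap at the leaves. In a \cuts~instance only $s_r,t_r,s_b,t_b$ are precolored. The former vertices of $S_r\cup T_r\cup S_b\cup T_b$ that you place at the leaves are therefore ordinary non-terminal vertices, and your claim that a leaf ``need not be forced by cross edges, since it is precolored anyway'' is false.

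This breaks the reduction. Color every leaf of the two red trees blue, every leaf of the two blue trees red, and everything else as forced or arbitrarily. Leaves carry no cross edges in your construction, so every neighbor of an internal vertex of the $s_r$-tree is one of the following: a red internal vertex of that tree, a blue leaf, or a forced-blue vertex of a blue tree. Hence the red component of $s_r$ never leaves its tree, and symmetrically for $s_b$. Every instance you build is therefore a YES-instance, and the direction ``\cuts~YES $\Rightarrow$ $\varphi$ satisfiable'' fails.

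You must force the leaves too, by giving each leaf its own two cross edges. Alternatively, as the paper does, identify each former terminal with a vertex that is itself forced. Either way, the degree count then needs the fact that every terminal in the construction of \autoref{thm:disjoint_cuts_NPC} has degree $1$, not merely at most $5$.

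The bound $7$ is also only asserted, and under your depth condition it is false for balanced binary trees. Take a blue tree of depth $D \geq 2$. Its $2^D-1$ vertices of depth less than $D$ must absorb one cross edge from each of the $2^{D+2}-4$ non-root vertices of the two red trees, which is four on average. But degree $7$ leaves room for only two incoming cross edges at a non-root internal vertex, since it already has $1+2+2$ other edges, and for at most five at the root. The total capacity $2^{D+1}+1$ is smaller than $2^{D+2}-4$.

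A scheme that works is the following. Pad the four trees to a common size, list each tree in BFS order, and join the $j$-th vertex of each red tree to the $(j-1)$-th vertex of each blue tree, and vice versa. Ancestors precede descendants in BFS order, so your induction runs on $j$ instead of on depth. Every vertex then receives at most two cross edges, giving degree at most $1+2+2+2=7$ for internal vertices and $1+1+2+2=6$ for the (now forced) leaves.
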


\begin{proof}
    Again, we adapt the proof of \autoref{thm:disjoint_cuts_NPC}. In the reduction presented in the proof of \autoref{thm:disjoint_cuts_NPC}, many terminals are used. However, \cuts~requires exactly one terminal of each type. 
    To obtain this while maintaining boundedness of the maximum degree, 
    we will need some non-terminal vertices to simulate additional sources and targets. For this, we will add a ``copy gadget''. Consider the graph $G_1$ from \autoref{fig:cop_gadget}. The only way to color $a,b,c,d$ without creating a monochromatic source-to-target path is to color $a,b$ in blue and $c,d$ in red.
    Notice that, since $a$ is colored in blue and is a neighbor of a blue source, if there is a blue path from $a$ to a blue target, then there is a blue path from a blue source to a blue target. 
    Therefore, even though $a$ is not a terminal, $a$ behaves like a blue source. This observation works similarly for $b$ (resp. $c$,  $d$) which behaves like a blue target (resp. red source, red target). By chaining this construction (as illustrated with the graph $G_2$ from \autoref{fig:cop_gadget}), we can create as many vertices that behave like red or blue sources or targets as we need, all with maximum degree $6$. 

    The proof then proceeds as follows. We first assume that we can use as many terminals as we need and follow steps $(1)$, $(2)$ and $(3)$ from the proof of \autoref{thm:disjoint_cuts_NPC}. We then add a fourth step:
    \begin{itemize}
        \item[$(4)$] Add a sufficiently large copy of the graph $G_2$ from \autoref{fig:cop_gadget} ($6m$ layers are enough, where $m$ is the number of clauses of $\varphi$, as in the worst case we need $3$ layers per clause and $1$ layer per literal). Then, we consider each terminal $v$ that was created during the first three steps. Note that $v$ is currently of degree $1$. We turn $v$ into a non-terminal vertex by identifying it with a vertex from $G_2$ that is still of degree at most $6$ and that simulates a red source (resp. red target, blue source, blue target) if $v$ was assumed to belong to $S_r$ (resp. $T_r$, $S_b$, $T_b$). Hence, the maximum degree remains at most $7$.  
    \end{itemize}
    The remainder of the proof follows the same arguments.
\end{proof}

\begin{figure}[H]
    \centering
    \includegraphics[scale=1]{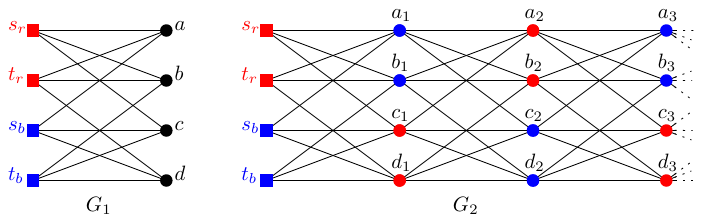}
    \caption{A construction allowing us to have as many vertices behaving like red or blue sources or targets as desired. The four vertices on the left of $G_2$ are the real terminals. The other vertices are not terminals, but we represent them as colored according to their unique available color, as explained in the proof of \autoref{coro:npc}.}
    \label{fig:cop_gadget}
\end{figure}

We end this section with a remark about the number of terminals in instances of \NP-hard restrictions of \generalcuts. In the construction from the proof of \autoref{thm:disjoint_cuts_NPC}, we have $|S_r| + |T_r| + |S_b| + |T_b| = \Theta(n)$. However, the proportion of terminal vertices can actually be made arbitrarily small, in the following sense.

\begin{corollary}\label{coro:epsilon_NPC}
    For every constant $\varepsilon > 0$, \generalcuts~is \NP-complete even when restricted to instances $(G,S_r,T_r,S_b,T_b)$ where $G$ is a planar graph of maximum degree at most $5$ and $|S_r| + |T_r| + |S_b| + |T_b| = O(n^\varepsilon)$. 
\end{corollary}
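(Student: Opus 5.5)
The plan is to take the instance $(G,S_r,T_r,S_b,T_b)$ built from a formula $\varphi$ in the proof of \autoref{thm:disjoint_cuts_NPC} and pad it. We add a large number of non-terminal vertices in a way that changes neither the answer nor the planarity, and that keeps the maximum degree at most $5$. The number of terminals in that construction is $N := |S_r|+|T_r|+|S_b|+|T_b| = \Theta(|\varphi|)$. We add $M$ new vertices forming a path $P$, with $M := \lceil N^{1/\varepsilon}\rceil$, so that the resulting order $n$ satisfies $n \geq N^{1/\varepsilon}$ and hence $N \leq n^{\varepsilon}$. Since $\varepsilon$ is a fixed constant, $M$ is polynomial in $|\varphi|$ (of degree roughly $1/\varepsilon$), so the reduction remains polynomial.

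The simplest padding is to make $P$ a separate connected component, disjoint from everything else. The new graph $G'$ is then the disjoint union of $G$ and $P$. It is planar and has maximum degree $\max(5,2)=5$. Moreover, $P$ contains no terminals, so the instances are equivalent: any valid coloring of $G$ extends to $G'$ by coloring $P$ arbitrarily, since no source-to-target path can pass through $P$. Conversely, a valid coloring of $G'$ restricts to a valid coloring of $G$, by \autoref{prop:subgraph} applied in its contrapositive form, or simply directly.

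If a connected instance were preferred, we would instead attach one endpoint of $P$ by a single edge to a non-terminal vertex of degree at most $4$, for instance a leaf-adjacent vertex of some tree gadget $\textrm{VAR}_x^k$. The attached $P$ is then a pendant path, and every vertex of it is a cut vertex whose removal leaves a component free of terminals. Any path between terminals therefore avoids $P$, and equivalence still holds.

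The only point requiring care is the polynomiality bound together with the exponent arithmetic, $N = O(n^{\varepsilon})$ with $n = \Theta(N + N^{1/\varepsilon})$. This is routine. Membership in \NP{} follows from \autoref{prop:np}, which completes the \NP-completeness claim.
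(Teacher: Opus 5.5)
Your proof is correct and takes essentially the same approach as the paper: both pad a hard planar maximum-degree-$5$ instance with terminal-free vertices so that $N \le n^{\varepsilon}$, and the reduction stays polynomial because $\varepsilon$ is a fixed constant. The differences are cosmetic: the paper adds an independent set of $\lceil N^{1/\varepsilon}\rceil - n$ vertices to an arbitrary such instance, with a case split when $N \le n^{\varepsilon}$ already holds, whereas you add a disjoint path of $\lceil N^{1/\varepsilon}\rceil$ vertices to the instance built from $\varphi$, which makes the case split unnecessary.
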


\begin{proof}
    We perform a reduction from \generalcuts~restricted to planar graphs of maximum degree at most $5$ (see \autoref{thm:disjoint_cuts_NPC}). Let $(G, S_r, T_r, S_b, T_b)$ be an instance of this problem, and let $n$ be the order of $G$. If $|S_r| + |T_r| + |S_b| + |T_b| \le n^\varepsilon$, then the reduction is just a copy of $(G, S_r, T_r, S_b, T_b)$. Therefore, assume that $|S_r| + |T_r| + |S_b| + |T_b| > n^\varepsilon$. We construct a new instance $(G', S_r, T_r, S_b, T_b)$, where $G'$ is obtained from $G$ by adding an independent set of order 
    \[k =  \left\lceil\left(|S_r| + |T_r| + |S_b| + |T_b|\right)^{1/\varepsilon}\right\rceil - n > 0.\] It is a straightforward observation that $(G, S_r, T_r, S_b, T_b)$ is a YES-instance if and only if $(G', S_r, T_r, S_b, T_b)$ is a YES-instance. Since the added independent set has size $k \ge (|S_r| + |T_r| + |S_b| + |T_b|)^{1/\varepsilon} - n$, we have $|S_r| + |T_r| + |S_b| + |T_b| \le (k + n)^\varepsilon = |V(G')|^\varepsilon$. Moreover, since $k$ is polynomial in $n$, the reduction is polynomial.
\end{proof}

In the previous argument, the size of the added independent set needs to be polynomial for the reduction to be polynomial. It is thus natural to wonder if we can improve the statement of \autoref{coro:epsilon_NPC}.

\begin{question}
Is \generalcuts~\NP-hard for instances $(G,S_r,T_r,S_b,T_b)$ such that $G$ is planar and $|S_r| + |T_r| + |S_b| + |T_b| = n^{o(1)}$?
\end{question}

\section{\cuts~in planar graphs}\label{sec:planar}

In this section, we will provide a structural characterization and a polynomial-time algorithm for \cuts~in planar graphs. 

Throughout this section, given a planar embedding of a 2-connected planar graph and a cycle $C$ of this graph, we say that a vertex/edge/cycle/face is {\em inside} (resp. {\em outside}) $C$ if all its vertices and edges are in the interior (resp. exterior) of $C$ or part of $C$ itself. To exclude vertices and edges of $C$, we use the term {\em strictly inside} $C$ or {\em strictly outside} $C$. In particular, the outer face is outside any given cycle. 
Finally, given three cycles $C$, $C_1$ and $C_2$, we say that $C$ is {\em between} $C_1$ and $C_2$ if either: $C_1$ is inside $C$ and $C_2$ is outside $C$, or if $C_2$ is inside $C$ and $C_1$ is outside $C$. Note that this definition does not mention the adverb ``strictly'', so that $C$ may be between $C_1$ and $C_2$ even if $C$ shares vertices or edges with $C_1$ and $C_2$ (we may even have $C=C_1=C_2$). 
We will use the same vocabulary for faces instead of cycles, in which case it is implicit that we consider their boundary cycle. See \autoref{fig:inside_outside} for an illustration.

\begin{figure}[h]
    \centering
    \includegraphics[scale=.64]{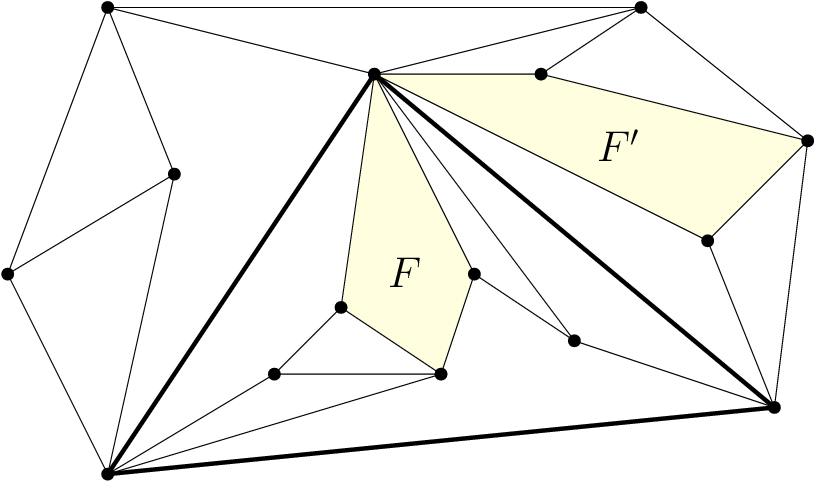}
    \caption{The highlighted triangle is between the faces $F$ and $F'$. It is also between $F$ and the outer face, but it is not between $F'$ and the outer face.
    }
    \label{fig:inside_outside}
\end{figure}

\subsection{Main ideas and preliminary lemmas}

To understand the main idea behind the YES/NO dichotomy for planar instances of \cuts, it helps to imagine that the four terminals form a 4-cycle which bounds the outer face, as is the case for the \Hex~board (recall \autoref{fig:hexboard}). \autoref{fig:exampleplanar} features two such instances.

\begin{figure}[h]
    \centering
    \includegraphics[scale=.64]{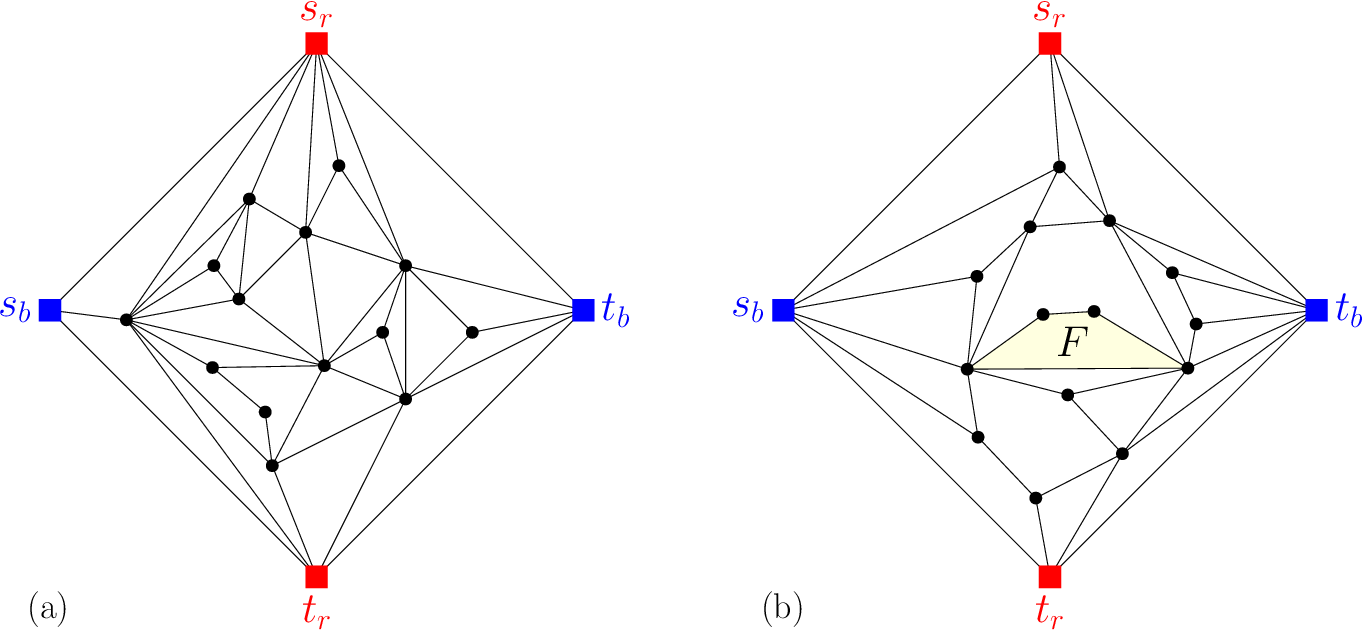}
    \caption{Two planar instances of \cuts: (a) is a NO-instance while (b) is a YES-instance.}
    \label{fig:exampleplanar}
\end{figure}

Gale came up with an elegant algorithm to show that \Hex~cannot end in a draw \cite{Gale1979}, i.e., the corresponding instance of \cuts~is a NO-instance. Schachner noticed that Gale's method could be generalized to any board in which the inner faces are triangles \cite{Schachner2019}. For the sake of self-containment, we now detail the proof of this result, with a slightly stronger statement which only asks that every inner face is inside a triangle. In particular, the following lemma proves that the instance from \autoref{fig:exampleplanar}(a) is a NO-instance.

\begin{lemma}\label{lem:planar-sufficient}
    Let $(G,s_r,t_r,s_b,t_b)$ be an instance of \cuts~such that $G$ is a 2-connected planar graph and $s_r s_b t_r t_b$ is a 4-cycle, which we name $C$. Suppose that there exists a planar embedding of $G$ such that, for every face $F$ inside $C$, there exists a triangle between $F$ and $C$. Then, $(G,s_r,t_r,s_b,t_b)$ is a NO-instance.
\end{lemma}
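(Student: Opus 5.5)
Our plan is to run Gale's walk argument, adapted so that it only needs triangles between each face and $C$ rather than triangular faces. Fix the embedding given by the hypothesis and let $c$ be any red-blue coloring of the non-terminal vertices; we must exhibit a red $s_rt_r$-path or a blue $s_bt_b$-path. First we reduce to the case where every face strictly inside $C$ is a triangle, by choosing, for each face $F$ inside $C$, a triangle $\Delta_F$ between $F$ and $C$ that is minimal (nothing smaller between $F$ and $C$). The set of minimal such triangles, together with $C$, should be laminar: two triangles are either nested or have disjoint interiors, since crossing triangles in a planar graph share two vertices, and we can then take the innermost one. We define the region $R$ to be the closure of the union of the interiors of the maximal triangles of this family, together with the parts of the disk bounded by $C$ that are not covered. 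The hypothesis says every face inside $C$ lies inside some triangle, so the disk bounded by $C$ is tiled by the outermost triangles $\Delta_1,\dots,\Delta_m$. More precisely, the faces of the subgraph $H$ formed by $C$ and the boundaries of these outermost triangles are exactly the $\Delta_i$ together with the outer face.

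By \autoref{prop:subgraph} it then suffices to show that $(H',s_r,t_r,s_b,t_b)$ is a NO-instance, where $H'$ is the subgraph of $G$ consisting of $C$ and all edges of the outermost triangles. Here $H'$ is a near-triangulation of the disk: the outer boundary is the 4-cycle $C$, and every inner face is a triangle, because a face of $H'$ inside $C$ that is not one of the $\Delta_i$ would contain some face $F$ of $G$ not inside any triangle. The delicate point is to justify that faces of $H'$ inside $C$ are exactly triangles. Two issues need care here: a triangle's interior may contain further vertices of $G$, which is harmless since we drop them; and the chosen triangles must not overlap, which laminarity gives once we take maximal ones.

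With $H'$ in hand, we run Gale's argument. Add a virtual edge-walk starting on the boundary edge $s_rs_b$, which has a red endpoint and a blue endpoint. Since every inner face is a triangle, at each step we cross the current bichromatic edge into the unique inner triangle on the other side, and the third vertex determines which of the other two edges is bichromatic; we exit through it. The walk never revisits a triangle: each triangle has exactly two bichromatic edges, so the walk's trajectory through the dual is a path and cannot cycle, because a cycle would need a triangle entered twice. By finiteness the walk must exit through a boundary edge of $C$ other than $s_rs_b$, namely one of $s_bt_r$, $t_rt_b$, or $t_bs_r$. Along the walk, the red endpoints of the crossed edges form a connected red walk on the left side and the blue endpoints a connected blue walk on the right side, starting from $s_r$ and $s_b$ respectively. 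If the walk exits at $s_bt_r$, the red side ends at $t_r$, so we get a red $s_rt_r$-path. If it exits at $t_bs_r$, the blue side ends at $t_b$, so we get a blue $s_bt_b$-path. If it exits at $t_rt_b$, both occur.

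We expect the main obstacle to be the reduction step: carefully proving that the outermost triangles tile the disk bounded by $C$ without overlapping, so that $H'$ is a genuine near-triangulation bounded by $C$. The walk argument itself is standard. The case where some triangle shares edges with $C$ fits in naturally; $C$ being a 4-cycle with non-adjacent same-colored terminals ensures $C$ itself is not a triangle.
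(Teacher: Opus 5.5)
Your proposal is correct and follows the paper's proof. Like the paper, you use \autoref{prop:subgraph} to pass to a subgraph in which $C$ bounds the outer face and every inner face is a triangle, and then run Gale's walk from the mixed edge $s_rs_b$; the paper deletes the vertices strictly outside $C$ and strictly inside triangles, which amounts to keeping $C$ and the outermost triangles as you do. One correction to your laminarity remark: two triangles of a plane graph can never cross at all. A vertex of one strictly inside the other forces nesting, and a triangle has no chords, so maximal triangles automatically have disjoint interiors.
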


\begin{proof}
    We assume that $C$ is the outer face and that all inner faces are triangles. Indeed, by our assumption on $G$, we could reduce to that case by removing all vertices strictly outside $C$ and all vertices strictly inside remaining triangles, and then conclude using \autoref{prop:subgraph}.

    Fix a red-blue coloring of the non-terminal vertices. We want to show that there exists a red $s_rt_r$-path or a blue $s_bt_b$-path. For this, we use Gale's algorithm \cite{Gale1979}, which builds a walk in the dual graph of $G$ (\autoref{fig:algorithm} provides an illustration, on the instance from \autoref{fig:exampleplanar}(a) where we have removed the two vertices that were strictly inside a triangle). An edge of $G$ is called {\em mixed} if its endpoints have different colors. We define two finite sequences $(r_i)_i$ and $(b_i)_i$ of red and blue vertices respectively, as follows. We set $r_0=s_r$ and $b_0=s_b$. We start on the outer face, and we ``enter'' the graph through the mixed edge $s_rs_b=r_0b_0$. From there, any time we enter an inner face $r_ib_iv$ through the mixed edge $r_ib_i$, we exit it through its only other mixed edge. If that edge is $r_iv$, i.e., $v$ is blue, then we define $r_{i+1}=r_i$ and $b_{i+1}=v$. Otherwise, we define $r_{i+1}=v$ and $b_{i+1}=b_i$. An inner face that has been visited once will never be visited again, since both its mixed edges have been used. Therefore, at some point, we enter the outer face again through some mixed edge $r_kb_k$: when this happens, the algorithm stops.

\begin{figure}[h]
    \centering
    \includegraphics[scale=.64]{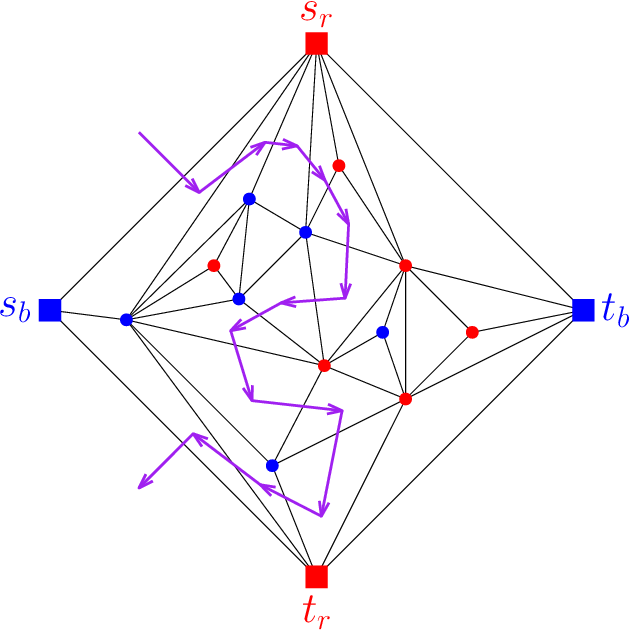}
    \caption{Illustration of Gale's algorithm. In this example, we get the highlighted red $s_rt_r$-path.}
    \label{fig:algorithm}
\end{figure}

We know that $(r_k,b_k) \in \{(s_r,t_b),(t_r,s_b),(t_r,t_b)\}$. As a side note, it is actually impossible that $(r_k,b_k)=(t_r,t_b)$, since we always go through a mixed edge with its red vertex to our left and its blue vertex to our right. In all cases, we have $r_k=t_r$ or $b_k=t_b$. If $r_k=t_r$ then, since $r_{i+1}$ either equals $r_i$ or is a neighbor of $r_i$ by construction, the sequence $(s_r=r_0,r_1\ldots,r_k=t_r)$ contains a red $s_rt_r$-path. If $b_k=t_b$ then, similarly, the sequence $(s_b=b_0,b_1\ldots,b_k=t_b)$ contains a blue $s_bt_b$-path.
\end{proof}

Now, consider the instance from \autoref{fig:exampleplanar}(b). There are faces, such as the highlighted face $F$, which are not inside any triangle, so \autoref{lem:planar-sufficient} does not apply. Actually, we can show that we have a YES-instance, using the following general idea.

By planarity, the vertex set of any $s_rt_r$-path is an $(s_b,t_b)$-separator, and vice-versa. Therefore, the existence of an $s_rt_r$-path and an $s_bt_b$-path which are vertex-disjoint would be a sufficient condition for being a YES-instance. However, such two paths cannot coexist in a planar graph. Instead, we consider ``pseudopaths'', which may follow edges of the graph but are also allowed to traverse faces. The vertex set of any $s_rt_r$-pseudopath is still an $(s_b,t_b)$-separator, and vice-versa. The key difference is that it is actually possible for an $s_rt_r$-pseudopath and an $s_bt_b$-pseudopath to be vertex-disjoint, since they can cross inside a non-triangular face. This is illustrated in \autoref{fig:yes-instance1}, on the example from \autoref{fig:exampleplanar}(b). 

\begin{figure}[h]
    \centering
    \includegraphics[scale=.64]{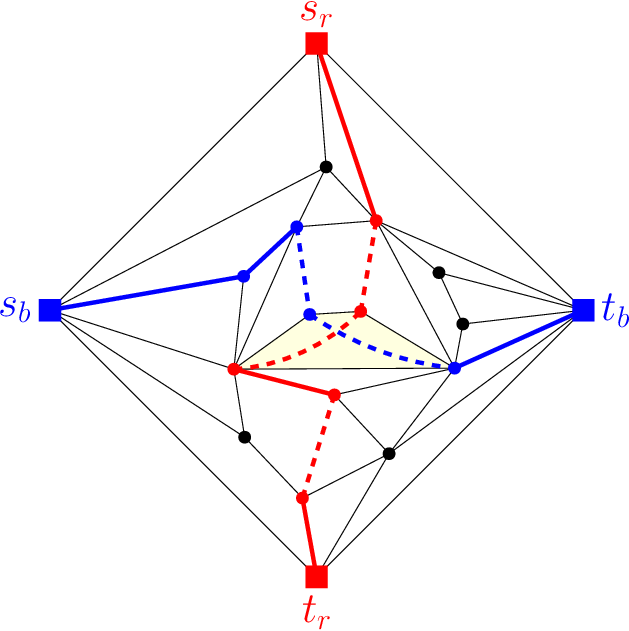}
    \caption{The instance from \autoref{fig:exampleplanar}(b), with two highlighted pseudopaths which cross inside the face $F$, showing that it is a YES-instance.}
    \label{fig:yes-instance1}
\end{figure}

In the rest of this section, we will not use the ``pseudopath'' terminology: instead, it will be convenient to see pseudopaths as actual paths in an augmented graph, which is defined as follows.

\begin{definition}
    Let $G$ be a 2-connected planar graph along with a planar embedding $\phi$. The {\em $\phi$-completion} of $G$ is the graph, denoted by $G^{\phi}$, with same vertex set as $G$ and where two vertices are adjacent in $G^{\phi}$ if and only if there is a face of $G$ that contains both. In other words, $G^{\phi}$ is the graph obtained by ``turning every face of $G$ into a clique'', including the outer face. 
\end{definition}

Note that, if $G$ is a triangulation  (i.e., a planar graph in which every face, including the outer face, is a triangle), then $G^{\phi}=G$. It is well known that all triangulations are 3-connected, and that a triangulation is 4-connected if and only if it has no triangle separator~\cite{Diestel2017,Hakimi1978}. We now generalize this result to all $\phi$-completions of 2-connected planar graphs. The proof uses the following version of Menger's theorem.

\begin{theorem*}[Menger \cite{Menger1927}]
	Let $G=(V,E)$ be a graph and let $s,t \in V$ be distinct non-adjacent vertices. Then, the size of a minimum $(s,t)$-separator in $G$ is equal to the maximum number of internally-vertex-disjoint $st$-paths in $G$.
\end{theorem*}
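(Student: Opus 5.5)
The statement just above is Menger's theorem, which the paper cites rather than proves. Here is how I would prove it directly, by induction on $|V|+|E|$.

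\textbf{Easy inequality.} Let $k$ be the size of a minimum $(s,t)$-separator. Any family of internally-vertex-disjoint $st$-paths has at most $k$ members. Indeed, since $s$ and $t$ are non-adjacent, every such path has an internal vertex, and every $(s,t)$-separator must contain one internal vertex from each path. These vertices are distinct because the paths are internally disjoint.

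\textbf{Hard inequality, first case.} Suppose some minimum separator $X$ with $|X|=k$ satisfies $X\neq N(s)$ and $X\neq N(t)$. Let $A$ be the union of the components of $G-X$ meeting $s$, and $B$ the rest of $V\setminus X$, which contains $t$. Form $G_s$ from $G[A\cup X]$ by adding a new vertex $t'$ adjacent to all of $X$. Form $G_t$ symmetrically from $G[B\cup X]$ with a new vertex $s'$. Both graphs are smaller than $G$: since $X\neq N(s)$, some vertex of $X$ has a neighbour in $B$, or $B$ contains a vertex other than $t$, and similarly on the other side. I would check this size claim carefully, possibly replacing $X$ by an $s$-side-closest or $t$-side-closest minimum separator to guarantee a strict decrease. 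In $G_s$ every $(s,t')$-separator is an $(s,t)$-separator of $G$, so its minimum separator still has size $k$. By induction $G_s$ has $k$ disjoint paths from $s$ to $X$, and likewise $G_t$ has $k$ disjoint paths from $X$ to $t$. Gluing these at the $k$ distinct vertices of $X$ gives the required $k$ internally disjoint $st$-paths.

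\textbf{Hard inequality, second case.} Now every minimum separator equals $N(s)$ or $N(t)$.
\begin{itemize}
\item If some vertex $v\notin N(s)\cup N(t)\cup\{s,t\}$ exists, delete it. Any smaller separator of $G-v$ together with $v$ would be a minimum separator of $G$ different from $N(s)$ and $N(t)$, which is impossible. So the separator size is unchanged, and induction applies.
\item If some $v\in N(s)\cap N(t)$ exists, then $G-v$ has separator size exactly $k-1$. Induction gives $k-1$ paths, and we add the path $svt$.
\item Otherwise $V=\{s,t\}\cup N(s)\cup N(t)$ with $N(s)\cap N(t)=\varnothing$. Edges inside $N(s)$ or inside $N(t)$ can be deleted without lowering $k$: a separator of the smaller graph separates the original too, since such edges only join vertices on the same side. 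So we may assume the remaining graph is bipartite between $N(s)$ and $N(t)$. There, a separator is exactly a vertex cover of these cross edges, so König's theorem gives a matching of size $k$. Each matching edge $xy$ yields the path $sxyt$.
\end{itemize}

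\textbf{Main obstacle.} The delicate point is ensuring strict decrease in the splitting step. I would first try choosing $X$ to be a minimum separator that is not $N(s)$ or $N(t)$, and then argue that both $A\setminus\{s\}$ and $B\setminus\{t\}$ are nonempty, or that some edge is lost, so that $|V|+|E|$ drops for both $G_s$ and $G_t$.
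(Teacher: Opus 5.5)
The paper does not prove this statement. Menger's theorem is quoted from \cite{Menger1927} and used as a black box in the proof of Lemma~\ref{lem:completion}, so there is no proof in the paper to compare with. Your argument is the standard textbook induction (essentially the proof in West's \emph{Introduction to Graph Theory}), and it works. The easy inequality is correct, and so is the transfer of $(s,t')$-separators of $G_s$ to $(s,t)$-separators of $G$. The gluing step also holds: each of the $k$ internally disjoint $st'$-paths in $G_s$ meets $X$ only in its penultimate vertex, because these penultimate vertices are distinct and exhaust $X$. All three sub-cases of your second case check out.

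Two small remarks on the last sub-case. First, deleting edges inside $N(s)$ and $N(t)$ is unnecessary. Every $st$-path must use an edge between $N(s)$ and $N(t)$, so the $(s,t)$-separators of $G$ are already exactly the vertex covers of the bipartite graph of cross edges. Second, K\"onig's theorem is the bipartite special case of Menger's theorem. To avoid circularity, you should rely on a proof of it that does not go through Menger (e.g.\ alternating paths).

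The one step you leave open is the strict decrease in the splitting case, and your tentative justification there is muddled. The size of $G_s$ is governed by the hypothesis $X\neq N(t)$, not $X\neq N(s)$. Also, the mere existence of edges from $X$ to $B$ gives nothing, since $t'$ brings $k$ new edges back. You do not need closest separators; the missing observation is minimality of $X$.

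If some $x\in X$ had no neighbour in the component $C_t$ of $t$ in $G-X$, then $X\setminus\{x\}$ would still separate $s$ from $t$. So every vertex of $X$ has a neighbour in $C_t\subseteq B$, and symmetrically in $A$. Now suppose $B=\{t\}$. This gives $X\subseteq N(t)$, while $N(t)\subseteq X$ because $t$ has no neighbour in $A$. Hence $X=N(t)$, a contradiction.

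Thus $|B|\geq 2$, so $|V(G_s)|=|A|+k+1<|A|+k+|B|=|V(G)|$. Moreover $|E(G_s)|\leq|E(G)|$, since $|E(X,B)|\geq k$. Symmetrically, $X\neq N(s)$ forces $|A|\geq 2$, so $G_t$ is also strictly smaller. With this filled in, your proof is complete.
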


\begin{lemma}\label{lem:completion}
    Let $G=(V,E)$ be a 2-connected planar graph along with a planar embedding $\phi$. Then:
    \begin{itemize}[nolistsep,noitemsep]
        \item $G^{\phi}$ is 3-connected.
        \item For all distinct non-adjacent $s,t \in V$, the following three assertions are equivalent:
        \begin{enumerate}[noitemsep,nolistsep,label={(\roman*)}]
            \item There exist four pairwise internally-vertex-disjoint $st$-paths in $G^{\phi}$.
            \item There is no triangle $T$ in $G$ such that, for $\phi$, one of $s$ or $t$ is strictly inside $T$ and the other is strictly outside $T$.
            \item There is no triangle $(s,t)$-separator in $G$.
        \end{enumerate}
    \end{itemize}
\end{lemma}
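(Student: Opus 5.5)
The plan is to work with the geometry of the embedding throughout. Write $\phi$ also for the induced embedding of $G^{\phi}$: inside each face of $G$ we draw the new chords, so that $G^{\phi}$ is obtained from $G$ by triangulating every face in all possible ways at once. The central tool is a description of small separators of $G^{\phi}$ as closed curves.

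\textbf{Step 1: a separator of $G^{\phi}$ of size $k\le 3$ yields a closed curve.} Let $X$ be such a separator, and let $A$ and $B$ be vertices lying in different components of $G^{\phi}-X$. I will build a closed curve $\gamma$ in the plane that meets the drawing of $G$ only in vertices of $X$, with $A$ and $B$ on opposite sides of $\gamma$. The construction goes as follows.
\begin{itemize}
\item Two vertices on a common face are adjacent in $G^{\phi}$. Hence, for any face $F$ of $G$, all vertices of $F$ outside $X$ lie in a single component of $G^{\phi}-X$.
\item Color each face of $G$ by the component it meets. If a face meets no component, then all its vertices are in $X$.
\item Going around a vertex $x\in X$, the faces incident to $x$ switch component only when the separating edges of $G$ at $x$ lead into $X$ or come from faces contained in $X$.
\end{itemize}
A cleaner route is the standard one. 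Take the component $K$ containing $A$, and consider the union $R$ of all closed faces of $G$ that contain a vertex of $K$, together with $K$ itself. Then the boundary of $R$ passes only through vertices of $X$: any edge of $G$ on that boundary lies on a face touching $K$, so both its endpoints are in $K\cup X$. Since two adjacent faces share an edge, the boundary walks through faces and consecutive boundary vertices are cofacial. Moreover, each boundary component corresponds to a cyclic sequence $x_1,\dots,x_m$ of vertices of $X$ in which consecutive vertices share a face. A 2-connectedness argument should show that the relevant boundary component has $m\ge 3$ distinct vertices. It also shows that it cannot have fewer, because with $m\le 2$ the vertices $x_1,x_2$ would already separate $G$, contradicting the 2-connectedness of $G$ (this takes some care).

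\textbf{Step 2: 3-connectivity.} With $m\ge 3$ we get $|X|\ge 3$ for every separator, and we also need $|V|\ge 4$ or completeness. This gives the first bullet. For the equivalences, I will treat each implication separately.
\begin{itemize}
\item \textbf{(iii)$\Rightarrow$(i).} Suppose there are not four internally-disjoint paths. By Menger, there is an $(s,t)$-separator $X$ in $G^{\phi}$ with $|X|=3$. By Step 1, $X=\{x_1,x_2,x_3\}$ forms a cyclic sequence of pairwise cofacial vertices, and the closed curve through the three faces separates $s$ from $t$.
\item \textbf{Main obstacle: turning this curve into an actual triangle of $G$.} I must show that $x_1x_2$, $x_2x_3$, $x_3x_1$ are true edges of $G$ and that the triangle still separates $s$ from $t$. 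If $x_ix_{i+1}$ is not an edge of $G$, then the face $F$ they share has other boundary vertices on both sides of the chord through $F$. These lie on both sides of the curve, so they belong to both $K$ and the other component, which is impossible since they are cofacial and non-$X$. The case $m=3$ with both sides empty must be handled: then one arc of $F$ between $x_i$ and $x_{i+1}$ has no interior vertices, which means $x_ix_{i+1}$ is an edge of $F$. I would rework the boundary curve so that it follows these edges. The curve is then homotopic to the triangle within the closed faces touched, so the triangle separates $s$ and $t$, which contradicts (iii) as a separator. Actually this produces a triangle separator, contradicting (iii), as desired.
\item \textbf{(ii)$\Rightarrow$(iii).} A triangle $(s,t)$-separator in $G$ is a Jordan curve with $s,t\notin T$. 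If $s$ and $t$ were on the same side of it, a path avoiding the triangle would exist in the plane region; more precisely, I would use 2-connectivity to connect within that side. Concretely, I would argue that the interior of $T$, together with the vertices on the same side, induces a connected subgraph once $T$ is removed only when both are on the same side. Honestly, the simpler direction to prove is the contrapositive of (ii)$\Rightarrow$(iii), namely (iii) fails $\Rightarrow$ (ii) fails: a separating triangle must have $s$ and $t$ on different sides, because vertices strictly on one side of $T$ together with $G-T$ form a connected subgraph. This last claim follows from the Step 1 machinery, since a disconnection on one side would give a 3-separator curve inside. Alternatively, I can route this implication through (i).
\item \textbf{(i)$\Rightarrow$(ii).} Each of the four disjoint paths must cross the curve $T$, and it can cross only at vertices of $T$, because edges of $G^{\phi}$ lie inside faces. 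Faces meet $T$ only at its vertices or edges, and a face on one side cannot have a chord crossing $T$. By pigeonhole, four paths and three vertices give a contradiction.
\end{itemize}
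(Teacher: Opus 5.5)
Your chain (iii)$\Rightarrow$(i)$\Rightarrow$(ii)$\Rightarrow$(iii) breaks at the last link, and the link cannot be repaired because (ii)$\Rightarrow$(iii) is false. You rely on the claim that the vertices strictly on one side of a triangle $T$ induce a connected subgraph of $G-T$. That claim fails once $G$ has 2-cuts, which 2-connectivity allows.

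Here is a counterexample. Take a triangle $abc$. On the side of the edge $ab$ away from $c$, draw a 4-cycle $axyb$ with a vertex $s$ inside it adjacent to $a,x,y,b$. On the side of $bc$ away from $a$, draw a 4-cycle $bpqc$ with a vertex $t$ inside it adjacent to $b,p,q,c$.
\begin{itemize}
\item This $G$ is 2-connected, and $s,t$ are not cofacial, so they are non-adjacent even in $G^{\phi}$.
\item $G-\{a,b,c\}$ has components $\{s,x,y\}$ and $\{t,p,q\}$, so (iii) fails.
\item Every triangle of $G$ other than $abc$ contains $s$ or $t$, and $s,t$ both lie strictly outside $abc$, so (ii) holds.
\item Routing through (i), as you suggest, does not help either. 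The vertices $a,x,y,b,p,q,c$ all lie on the outer face, so $sbt$, $sact$, $sxpt$, $syqt$ are four internally disjoint $st$-paths in $G^{\phi}$.
\end{itemize}
So only (i)$\Leftrightarrow$(ii) and (iii)$\Rightarrow$(ii) are true. The paper's own one-line argument for (i)$\Rightarrow$(iii) says a triangle separator of $G$ stays a separator of $G^{\phi}$. That breaks on the same example, since the outer-face chord $xp$ bypasses $\{a,b,c\}$. Luckily, Theorem~\ref{thm:planar} only uses (ii)$\Rightarrow$(i).

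For that hard direction, your route genuinely differs from the paper's. The paper takes two internally disjoint $st$-paths of $G$ and reroutes them around $x_1,x_2,x_3$ through the wheels $G^{\phi}[N_G[x]]$, with a three-case analysis. Your region argument is valid and shorter, but state its key step directly rather than through chords and homotopies:
\begin{itemize}
\item Suppose $X=\{x_1,x_2,x_3\}$ separates $s$ from $t$ in $G^{\phi}$. Let $R$ be the union of the closed faces of $G$ meeting the component $K\ni s$ of $G^{\phi}-X$.
\item Every edge of $G$ on $\partial R$ has one incident face meeting $K$ and one not. Since cofacial vertices are adjacent in $G^{\phi}$, both endpoints of such an edge lie in $X$.
\item Since $s$ is interior to $R$ and $t\notin R$, these boundary edges must contain a cycle. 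That cycle can only be the triangle $x_1x_2x_3$ of $G$, with $s$ and $t$ strictly on opposite sides. This gives $\neg$(i)$\Rightarrow\neg$(ii).
\end{itemize}
The same argument with $|X|\le 2$ gives 3-connectivity (for $|V|\ge 4$). It also replaces your Step~1 reasoning, which is wrong as written: a pair $\{x_1,x_2\}$ separating $G$ would not contradict 2-connectivity.
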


\begin{proof}

    The 3-connectivity of $G^{\phi}$ is straightforward, since it clearly contains a triangulation of $G$ as a spanning subgraph, and all triangulations are 3-connected~\cite{Diestel2017,Hakimi1978}. However, the second part of the lemma cannot be obtained as easily, as triangulating $G$ may create a triangle $(s,t)$-separator, preventing us from using the aforementioned 4-connectivity result for triangulations (actually, it can be easily seen that the equivalence would be false if $G^{\phi}$ was replaced by a triangulation of $G$).

    Let $s,t \in V$ be distinct. It is clear that {$(iii)$}{$\implies$}{$(ii)$}. Moreover, we also have {$(i)$}{$\implies$}{$(iii)$}: indeed, a triangle $(s,t)$-separator in $G$ would also be an $(s,t)$-separator in $G^{\phi}$, so by Menger's theorem it would imply that there do not exist four pairwise internally-vertex-disjoint $st$-paths in $G^{\phi}$. Therefore, it only remains to show that {$(ii)$}{$\implies$}{$(i)$}.

    Suppose that there is no triangle $T$ in $G$ such that, for $\phi$, one of $s$ or $t$ is strictly inside $T$ and the other is strictly outside $T$. We proceed by contradiction and assume that there do not exist four pairwise internally-vertex-disjoint $st$-paths in $G^{\phi}$. By Menger's theorem, this means there exists an $(s,t)$-separator $\{x_1,x_2,x_3\}$ in $G^{\phi}$. Also by Menger's theorem, since $G$ is 2-connected, there exist two internally-vertex-disjoint $st$-paths $P_1=u_0u_1\ldots u_k$ and $P_2=v_0v_1\ldots v_{\ell}$ in $G$, where $u_0=v_0=s$ and $u_k=v_{\ell}=t$. If $P_1$ or $P_2$ contains neither $x_1$, $x_2$ nor $x_3$, then that path is an $st$-path in $G$ (and in $G^{\phi}$) that avoids the separator $\{x_1,x_2,x_3\}$, so we already have our contradiction. Therefore, up to some re-indexing, assume that $P_1$ contains $x_1$ but not $x_2$ nor $x_3$, and that $P_2$ contains at least one of $x_2$ or $x_3$. Define $i \in \{1,\ldots,k-1\}$ as the unique index such that $x_1=u_i$. 

    The key observation is that, for every $x \in V$, $G^{\phi}[N_G[x]]$ contains a spanning wheel centered at $x$ (see $x_1$ in \autoref{fig:pseudopath1} for example). Indeed, if $y_0,\ldots,y_{|N_G(x)|-1}$ is a clockwise list of the neighbors of $x$ in $G$, then $y_j$ and $y_{j+1}$ (where the indices are taken modulo $|N_G(x)|$) sit on a common face of $G$, so they are adjacent in $G^{\phi}$. This wheel can then be used to get around $x$ if we want to build a path that avoids $x$. We will use this method to build an $st$-path in $G^{\phi}$ that avoids $x_1$, $x_2$ and $x_3$, hence the contradiction.

    Let $C$ be the cycle formed by the paths $P_1$ and $P_2$. Let $w_0,w_1,\ldots,w_p$ be the neighbors of $x_1=u_i$ in $G$ whose incident edges to $x_1$ lie inside $C$, listed in the order encountered along $\phi$ from $w_0=u_{i-1}$ to $w_p=u_{i+1}$. Similarly, let $w'_0,w'_1,\ldots,w'_{p'}$ be the neighbors of $x_1=u_i$ in $G$ whose incident edges to $x_1$ lie outside $C$, listed in the order encountered along $\phi$ from $w'_0=u_{i-1}$ to $w'_{p'}=u_{i+1}$. See \autoref{fig:pseudopath1}. 

    \begin{figure}[h]
    \centering
    \begin{tikzpicture}[scale=1.5]

    \draw (0,0) node[vertex] (s){} node[left = .1] {$u_0=s$};
    \draw (0.5,0.7) node[vertex] (v1){} node[left = .1] {};
    \draw (1.2,1.1) node[vertex] (v2){} node[left = .1] {};
    \draw (2,1.4) node[vertex] (v3){} node[left = .1] {};
    \draw (3,1.5) node[vertex] (v4){} node[left = .1] {};
    \draw (4,1.4) node[vertex] (v5){} node[left = .1] {};
    \draw (4.8,1.1) node[vertex] (v6){} node[left = .1] {};
    \draw (5.5,0.7) node[vertex] (v7){} node[left = .1] {};
    \draw (6,0) node[vertex] (t){} node[right = .1] {$t=u_k$};
    \draw (0.8,-0.8) node[vertex] (u1){} node[left = .1] {};
    \draw (1.8,-1.4) node[vertex] (u2){} node[left  = .1] {$u_{i-1}$};
    \draw (3,-1.5) node[vertex] (u3){} node[below = .2] {$x_1$};
    \node[rectangle, line width=1.5, draw] at (3,-1.5) (rect1) {};
    \draw (4.2,-1.4) node[vertex] (u4){} node[right  = .1] {$u_{i+1}$};
    \draw (5.2,-0.8) node[vertex] (u5){} node[left = .1] {};

    \draw (s) -- (u1) -- (u2) -- (u3) -- (u4) -- (u5) -- (t) -- (v7) -- (v6) -- (v5) -- (v4) -- (v3) -- (v2) -- (v1) -- (s);

    \draw (2,-1) node[vertex] (w1){} node[above left] {$w_1$};
    \draw (2.45,-0.8) node[vertex] (w2){} node[above left] {$w_2$};
    \draw (3,-0.7) node[vertex] (w3){} node[left = .1] {};
    \draw (3.55,-0.8) node[vertex] (w4){} node[left = .1] {};
    \draw (4,-1) node[vertex] (w5){} node[above right] {$w_{p-1}$};
    \draw (2,-2) node[vertex] (w'1){} node[below left] {$w'_1$};
    \draw (2.45,-2.2) node[vertex] (w'2){} node[below left] {$w'_2$};
    \draw (3.55,-2.2) node[vertex] (w'4){} node[left = .1] {};
    \draw (4,-2) node[vertex] (w'5){} node[below right] {$w'_{p'-1}$};

    \path (u3) edge (w1) edge (w2) edge (w3) edge (w4) edge (w5) edge (w'1) edge (w'2) edge (w'4) edge (w'5);

    \draw[dashed] (u2) -- (w1) -- (w2) -- (w3) -- (w4) -- (w5) -- (u4) -- (w'5) -- (w'4) -- (w'2) -- (w'1) -- (u2);
    
    \end{tikzpicture}
    \caption{The bottom path is $P_1$ and the top path is $P_2$. Solid lines represent edges of $G$ and dashed lines represent edges of $G^{\phi}$ (which may or may not be edges of $G$). Note that each $w_j$ or $w'_j$ may actually be on the cycle $C$.}
    \label{fig:pseudopath1}
\end{figure}

We already know that $\{u_0,\ldots,u_{i-1},u_{i+1},\ldots,u_k\} \cap \{x_2,x_3\} = \varnothing$. If $\{w_1,\ldots,w_{p-1}\} \cap \{x_2,x_3\}=\varnothing$, then the sequence $(u_0,u_1,\ldots,u_{i-1},w_1,\ldots,w_{p-1},u_{i+1},\ldots,u_k)$, which is an $st$-walk in $G^{\phi}$, avoids $x_1$, $x_2$ and $x_3$: this is a contradiction. Similarly, if $\{w'_1,\ldots,w'_{p'-1}\} \cap \{x_2,x_3\}=\varnothing$, then the $st$-walk $(u_0,u_1,\ldots,u_{i-1},w'_1,\ldots,w'_{p'-1},u_{i+1},\ldots,u_k)$ avoids $x_1$, $x_2$ and $x_3$: since this $st$-walk contains an $st$-path, we get a a contradiction.

Therefore, we now suppose that $\{w_1,\ldots,w_{p-1}\} \cap \{x_2,x_3\} \neq \varnothing$ and $\{w'_1,\ldots,w'_{p'-1}\} \cap \{x_2,x_3\} \neq \varnothing$. In particular, this means $x_1$ is a neighbor of both $x_2$ and $x_3$. Without loss of generality, we assume that $x_3\in\{w_1,\ldots,w_{p-1}\}$ and $x_2\in\{w'_1,\ldots,w'_{p'-1}\}$. In particular, it is impossible that $x_2$ and $x_3$ are adjacent in $G$, because $x_1x_2x_3$ would then be a triangle in $G$ such that $s$ or $t$ is strictly inside that triangle and the other is strictly outside, contradicting our assumption on $s$ and $t$. Consider \autoref{fig:pseudopath3} for visual aid (in this figure, $x_2$ and $x_3$ are both on the path $P_2$, but the argument works in all cases): we can see in this figure that, if there was an edge $x_2x_3$ in $G$, then the triangle $x_1x_2x_3$ would enclose $s$ and separate it from $t$. We end the proof by considering three exhaustive cases.

\begin{enumerate}[label={\arabic*)}]
    \item Case 1: the path $P_2$ contains $x_2$ but not $x_3$.
    
    Define $j \in \{1,\ldots,\ell-1\}$ as the unique index such that $x_2=v_j$. Let $y_0,y_1,\ldots,y_q$ be the neighbors of $x_2$ in $G$ whose incident edges to $x_2$ lie inside $C$, listed in the order encountered along $\phi$ from $y_0=v_{j-1}$ to $y_q=v_{j+1}$ (see \autoref{fig:pseudopath2}). Note that $x_1,x_3 \not\in \{y_0,\ldots,y_q\}$ since the edge $x_1x_2$ is outside $C$ and there is no edge $x_2x_3$. Therefore, the sequence $(v_0,v_1,\ldots,v_{j-1},y_1,\ldots,y_{q-1},v_{j+1},\ldots,v_{\ell})$, which is an $st$-walk in $G^{\phi}$, avoids $x_1$, $x_2$ and $x_3$: this is a contradiction.
    
\begin{figure}[h]
    \centering
    \begin{tikzpicture}[scale=1.5,bezier bounding box]

    \draw (0,0) node[vertex] (s){} node[left = .05] {$v_0=s$};
    \draw (0.5,0.7) node[vertex] (v1){} node[above left] {$v_{j-1}$};
    \draw (1.2,1.1) node[vertex] (v2){} node[above = .1] {$x_2$};
    \node[circle, line width=1.5, draw] at (1.2,1.1) (rect1) {};
    \draw (2,1.4) node[vertex] (v3){} node[above] {$v_{j+1}$};
    \draw (3,1.5) node[vertex] (v4){} node[left = .1] {};
    \draw (4,1.4) node[vertex] (v5){} node[below] {};
    \draw (4.8,1.1) node[vertex] (v6){} node[below left = .1] {};
    \draw (5.5,0.7) node[vertex] (v7){} node[below left] {};;
    \draw (6,0) node[vertex] (t){} node[right = .05] {$t=v_{\ell}$};
    \draw (0.8,-0.8) node[vertex] (u1){} node[left = .1] {};
    \draw (1.8,-1.4) node[vertex] (u2){} node[left  = .1] {};
    \draw (3,-1.5) node[vertex] (u3){} node[below = .2] {$x_1$};
    \node[circle, line width=1.5, draw] at (3,-1.5) (rect1) {};
    \draw (4.2,-1.4) node[vertex] (u4){} node[right  = .1] {};
    \draw (5.2,-0.8) node[vertex] (u5){} node[left = .1] {};

    \draw (s) -- (u1) -- (u2) -- (u3) -- (u4) -- (u5) -- (t) -- (v7) -- (v6) -- (v5) -- (v4) -- (v3) -- (v2) -- (v1) -- (s);

    \draw (0.8,0.4) node[vertex] (y1){} node[below left] {$y_1$};
    \draw (1.3,0.3) node[vertex] (y2){} node[below] {$y_2$};
    \draw (1.8,0.5) node[vertex] (y3){} node[below left] {};
    \draw (2,0.9) node[vertex] (y4){} node[right] {$y_{q-1}$};

    \path (v2) edge (y1) edge (y2) edge (y3) edge (y4);

    \draw[dashed] (v1) -- (y1) -- (y2) -- (y3) -- (y4) -- (v3);

    \draw[line width=1.5] (u3)..controls (-3.5,-2.5) and (-1,2.5)..(v2);
    
    \end{tikzpicture}
    \caption{The bottom path is $P_1$ and the top path is $P_2$. Solid lines/curves represent edges of $G$ and dashed lines represent edges of $G^{\phi}$ (which may or may not be edges of $G$).}
    \label{fig:pseudopath2}
\end{figure}

    \item Case 2: the path $P_2$ contains $x_3$ but not $x_2$.
    
    This is analogous to the previous case. Define $j' \in \{1,\ldots,\ell-1\}$ as the unique index such that $x_3=v_{j'}$. Let $y'_0,y'_1,\ldots,y'_{q'}$ be the neighbors of $x_3$ in $G$ whose incident edges to $x_3$ lie outside $C$, listed in the order encountered along $\phi$ from $y'_0=v_{j'-1}$ to $y'_{q'}=v_{j'+1}$. Note that $x_1,x_2 \not\in \{y'_0,\ldots,y'_{q'}\}$ since the edge $x_1x_3$ is inside $C$ and there is no edge $x_2x_3$. Therefore, the sequence $(v_0,v_1,\ldots,v_{j'-1},y'_1,\ldots,y'_{q'-1},v_{j'+1},\ldots,v_{\ell})$, which is an $st$-walk in $G^{\phi}$, avoids $x_1$, $x_2$ and $x_3$: this is a contradiction.
        
    \item Case 3: the path $P_2$ contains both $x_2$ and $x_3$.
    
    Define $j,j' \in \{1,\ldots,\ell-1\}$ as the unique indices such that $x_2=v_j$ and $x_3=v_{j'}$. We define $y_0,y_1,\ldots,y_q$ as in Case 1 and $y'_0,y'_1,\ldots,y'_{q'}$ as in Case 2 (see \autoref{fig:pseudopath3}). We have $x_1,x_3 \not\in \{y_0,\ldots,y_q\}$ and $x_1,x_2 \not\in \{y'_0,\ldots,y'_{q'}\}$. Therefore, if $j<j'$, then the $st$-walk $(v_0,v_1,\ldots,v_{j-1},y_1,\ldots,y_{q-1},v_{j+1},v_{j+2},\ldots,v_{j'-1},y'_1,\ldots,y'_{q'-1},v_{j'+1},\ldots,v_{\ell})$ in $G^{\phi}$ avoids $x_1$, $x_2$ and $x_3$, a contradiction. Symmetrically, if $j'<j$, then the $st$-walk $(v_0,v_1,\ldots,v_{j'-1},y'_1,\ldots,y'_{q'-1},v_{j'+1},v_{j'+2},\ldots,v_{j-1},y_1,\ldots,y_{q-1},v_{j+1},\ldots,v_{\ell})$ in $G^{\phi}$ avoids $x_1$, $x_2$ and $x_3$, a contradiction. 
\end{enumerate}

\begin{figure}[h]
    \centering
    \begin{tikzpicture}[scale=1.5,bezier bounding box]

    \draw (0,0) node[vertex] (s){} node[left = .05] {$v_0=s$};
    \draw (0.5,0.7) node[vertex] (v1){} node[above left] {$v_{j-1}$};
    \draw (1.2,1.1) node[vertex] (v2){} node[above = .1] {$x_2$};
    \node[circle, line width=1.5, draw] at (1.2,1.1) (rect1) {};
    \draw (2,1.4) node[vertex] (v3){} node[above] {$v_{j+1}$};
    \draw (3,1.5) node[vertex] (v4){} node[left = .1] {};
    \draw (4,1.4) node[vertex] (v5){} node[below] {$v_{j'-1}$};
    \draw (4.8,1.1) node[vertex] (v6){} node[below left = .1] {$x_3$};
    \node[circle, line width=1.5, draw] at (4.8,1.1) (rect1) {};
    \draw (5.5,0.7) node[vertex] (v7){} node[below left] {$v_{j'+1}$};;
    \draw (6,0) node[vertex] (t){} node[right = .05] {$t=v_{\ell}$};
    \draw (0.8,-0.8) node[vertex] (u1){} node[left = .1] {};
    \draw (1.8,-1.4) node[vertex] (u2){} node[left  = .1] {};
    \draw (3,-1.5) node[vertex] (u3){} node[below = .2] {$x_1$};
    \node[circle, line width=1.5, draw] at (3,-1.5) (rect1) {};
    \draw (4.2,-1.4) node[vertex] (u4){} node[right  = .1] {};
    \draw (5.2,-0.8) node[vertex] (u5){} node[left = .1] {};

    \draw (s) -- (u1) -- (u2) -- (u3) -- (u4) -- (u5) -- (t) -- (v7) -- (v6) -- (v5) -- (v4) -- (v3) -- (v2) -- (v1) -- (s);

    \draw (0.8,0.4) node[vertex] (y1){} node[below left] {$y_1$};
    \draw (1.3,0.3) node[vertex] (y2){} node[below] {$y_2$};
    \draw (1.8,0.5) node[vertex] (y3){} node[below left] {};
    \draw (2,0.9) node[vertex] (y4){} node[right] {$y_{q-1}$};

    \path (v2) edge (y1) edge (y2) edge (y3) edge (y4);

    \draw[dashed] (v1) -- (y1) -- (y2) -- (y3) -- (y4) -- (v3);

    \draw (4.3,1.7) node[vertex] (y'1){} node[above left] {$y'_1$};
    \draw (4.8,1.8) node[vertex] (y'2){} node[above] {$y'_2$};
    \draw (5.3,1.6) node[vertex] (y'3){} node[below left] {};
    \draw (5.5,1.2) node[vertex] (y'4){} node[right] {$y'_{q'-1}$};

    \path (v6) edge (y'1) edge (y'2) edge (y'3) edge (y'4);

    \draw[dashed] (v5) -- (y'1) -- (y'2) -- (y'3) -- (y'4) -- (v7);

    \draw[line width=1.5] (u3)..controls (3.9,-1) and (4.8,0.5)..(v6);
    \draw[line width=1.5] (u3)..controls (-3.5,-2.5) and (-1,2.5)..(v2);
    
    \end{tikzpicture}
    \caption{The bottom path is $P_1$ and the top path is $P_2$. Solid lines/curves represent edges of $G$ and dashed lines represent edges of $G^{\phi}$ (which may or may not be edges of $G$). In this example, we have $j<j'$.}
    \label{fig:pseudopath3}
\end{figure}

In conclusion, we have reached a contradiction in all cases.
\end{proof}

\subsection{Polynomial-time algorithm in the case where the terminals form a cycle}

We provide a structural characterization of YES-instances in the 2-connected planar case when the four terminals form a cycle.

\begin{theorem}\label{thm:planar}
    Let $(G,s_r,t_r,s_b,t_b)$ be an instance of \cuts~such that $G$ is a 2-connected planar graph and $s_rs_bt_rt_b$ is a 4-cycle, which we name $C$. For a given planar embedding $\phi$ of $G$, let $\cP(\phi)$ be the following property: ``there exist faces $F_1$ and $F_2$, the former inside $C$ and the latter outside $C$, such that there is no triangle between $F_1$ and $C$ and no triangle between $F_2$ and $C$''. The following three assertions are equivalent:
        \begin{enumerate}[noitemsep,nolistsep,label={(\arabic*)}]
            \item $(G,s_r,t_r,s_b,t_b)$ is a YES-instance.
            \item There exists a planar embedding $\phi$ of $G$ such that property $\cP(\phi)$ holds.
            \item For every planar embedding $\phi$ of $G$, property $\cP(\phi)$ holds.
        \end{enumerate}
\end{theorem}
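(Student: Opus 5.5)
We prove the cycle of implications (3)$\implies$(2)$\implies$(1)$\implies$(3). The step (3)$\implies$(2) is immediate, since $G$ is planar and therefore has at least one planar embedding.

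For (1)$\implies$(3), argue by contraposition. Fix any embedding $\phi$ for which $\cP(\phi)$ fails. Then, on one side of $C$ (say inside, after possibly exchanging the roles of interior and exterior via stereographic projection), every face $F$ inside $C$ has a triangle between $F$ and $C$. This is exactly the hypothesis of \autoref{lem:planar-sufficient}, up to choosing which side of $C$ plays the role of ``inside''. To make this precise, re-embed $\phi$ on the sphere so that a face strictly outside $C$ becomes the outer face; then ``inside $C$'' is unchanged. Alternatively, observe that the proof of \autoref{lem:planar-sufficient} only uses the part of $G$ inside $C$, together with \autoref{prop:subgraph}. In either case we conclude that we have a NO-instance.

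For (2)$\implies$(1), fix $\phi$ with faces $F_1$ (inside) and $F_2$ (outside) as in $\cP(\phi)$. We build two vertex-disjoint paths in $G^{\phi}$, an $s_rt_r$-path and an $s_bt_b$-path, crossing inside $F_1$ and $F_2$, as in \autoref{fig:yes-instance1}. The goal is then to color the vertices of the first path red and all other vertices blue, and to check that there is no blue $s_bt_b$-path in $G$.

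To construct the paths, add a new vertex $f_1$ in $F_1$ adjacent to all vertices of $F_1$, and similarly $f_2$ in $F_2$. The pair $\{f_1,f_2\}$ is separated by $C$. We would like to apply \autoref{lem:completion} to the graph $H$ obtained from $G$ by adding $f_1$ and $f_2$ (triangulating the two faces by stars), which is still 2-connected and planar. The absence of a triangle between $F_i$ and $C$ should translate into the absence of a triangle of $H$ separating $f_1$ from $f_2$; any such triangle cannot use $f_i$ except trivially, and must have $F_1$ inside and $F_2$ outside. \autoref{lem:completion} then yields four internally disjoint $f_1f_2$-paths in $H^{\phi}$. These paths cross $C$ at four distinct vertices, necessarily $s_r,s_b,t_r,t_b$. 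Their cyclic order around $f_1$ matches the order around $C$, so pairing the half-paths through $s_r$ and $t_r$ gives an $s_rt_r$-path $P_r$ in $G^{\phi}$ via $F_1$ and $F_2$ (replace $f_i$ by a clique edge inside $F_i$), and similarly for $P_b$; they are vertex-disjoint. A planarity/Jordan-curve argument then shows that in $G$, the vertex set of $P_r$ (a pseudopath) separates $s_b$ from $t_b$, and vice versa. Color $V(P_r)$ red and everything else blue. Any blue $s_bt_b$-path avoids $P_r$, which is impossible. Any red $s_rt_r$-path lies inside $V(P_r)$, which is disjoint from $P_b$, so it is also blocked.

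The main obstacle is this last construction. Two points need care. First, getting exactly four disjoint paths that hit the four terminals in the right cyclic pattern; more precisely, the half-paths from $f_1$ should stay inside $C$ and those from $f_2$ outside, which we would obtain by applying the Menger step separately on each side, from $f_1$ to the four vertices of $C$ inside. Second, proving the separation property of pseudopaths rigorously via a Jordan curve through the faces they traverse. I would try first to apply \autoref{lem:completion} to $f_1$ and a new vertex placed in the outer region adjacent to the four terminals, restricted to the inside of $C$, and to do the same outside.
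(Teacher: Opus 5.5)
Your (3)$\implies$(2) and (1)$\implies$(3) match the paper: swap the two sides of $C$ if needed, then apply \autoref{lem:planar-sufficient}. For (2)$\implies$(1) your main route is different from the paper's, and it is sound.

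The paper first splits the instance along $C$: it is a YES-instance iff both $G[V_{in}]$ and $G[V_{out}]$ are. By symmetry it treats only the inside, applying \autoref{lem:completion} to a vertex $s$ in $F_1$ and an apex $t$ outside $C$ joined to the four terminals. Since $t$ has degree $4$, there is one path per terminal. You instead apply the lemma once to $H=G+f_1+f_2$, and this works:
\begin{itemize}
\item The lemma's hypothesis holds. A triangle with $f_1,f_2$ strictly on opposite sides contains neither, so it is a triangle $T$ of $G$ with $F_1,F_2$ on opposite sides. Since $C$ is an induced $4$-cycle, it meets $T$ in at most one edge and so lies on one side of $T$. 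Hence $T$ is between $C$ and $F_1$ or $F_2$, which (2) forbids.
\item Your worry about half-paths leaving their side is unfounded. Every edge of $H^{\phi}$ joins two vertices of a common face of $H$, and every face lies on one side of $C$. So $V(C)$ is a $4$-vertex $(f_1,f_2)$-separator of $H^{\phi}$. Each of the four paths therefore meets $C$ exactly once, and its halves lie strictly inside and strictly outside $C$.
\end{itemize}
So $V(C)$ plays the role of the paper's apex, and one application of the lemma gives both sides. Your fallback in the last paragraph is exactly the paper's proof, and you don't need it. The price is that the splitting along $C$, which the paper does up front, reappears in your separation step.

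Three things to fix.
\begin{itemize}
\item The claim that the cyclic order of the paths at $f_1$ matches the order on $C$ is false in general. Take the cube graph with outer face $C$, inner face $F_1=a_1a_2a_3a_4$, and spokes $a_1s_r,a_2s_b,a_3t_r,a_4t_b$. Then $f_1a_1s_bf_2$, $f_1a_2s_rf_2$, $f_1a_3t_rf_2$, $f_1a_4t_bf_2$ are internally disjoint in $H^{\phi}$, and the red and blue pseudopaths cross in the face $s_rs_ba_2a_1$, not in $F_1$. The claim is also never needed: pair the paths by the terminal they contain.
\item $P_r$ is not an $s_rt_r$-path. It is the union of two $s_rt_r$-pseudopaths, $P_r^{in}$ through $F_1$ and $P_r^{out}$ through $F_2$, and you need both. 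One alone does not separate $s_b$ from $t_b$ in $G$: a path $s_byt_b$ with $y$ outside $C$ avoids $V(P_r^{in})$.
\item The separation step is the real remaining work, and it is cleanest side by side. An $s_bt_b$-path in $G-V(P_r)$ avoids $s_r,t_r$, so its interior lies strictly on one side of $C$, say inside. Then $G[V_{in}]$, plus the edges of $P_r^{in}$, plus an apex outside $C$ adjacent to $V(C)$, contains a $K_5$-subdivision rooted at the apex and the four terminals. This contradicts Kuratowski's theorem, as in the paper. Beforehand, shortcut $P_r^{in}$ so that no two non-consecutive vertices of it share a face; otherwise two of its chords could cross inside one face and the auxiliary graph need not be planar. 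Alternatively, use that two curves in the closed disc bounded by $C$ joining alternating boundary points must meet. Here they can only meet at a vertex, since the chords run through open faces.
\end{itemize}
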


Note that the faces $F_1$ and $F_2$ cannot be triangles: for instance, if $F_1$ was a triangle, then it would be inside itself and $C$. Also note that $C$ is a possible candidate for the face $F_2$ if $C$ is the outer cycle: the YES-instance from \autoref{fig:exampleplanar}(b) illustrates this. See \autoref{fig:yes-instance2} for another example of a YES-instance.

\begin{figure}[h]
    \centering
    \includegraphics[width=\linewidth]{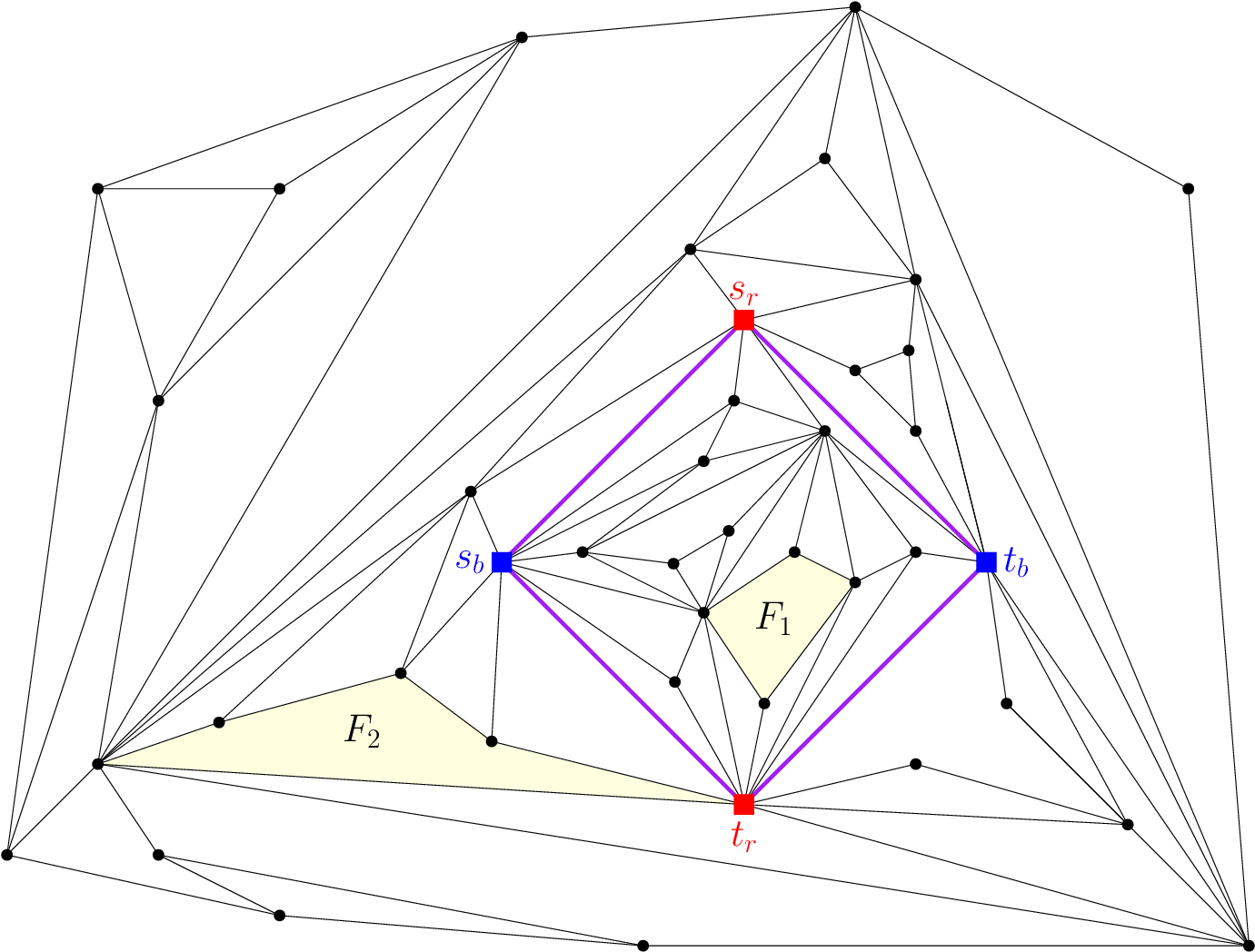}
    \caption{An example of a planar YES-instance where the four terminals form a cycle. The highlighted faces $F_1$ and $F_2$ satisfy the property from \autoref{thm:planar} (actually, it can be checked that they are the only ones that do).
    }
    \label{fig:yes-instance2}
\end{figure}

\begin{proof}
    Let $(G,s_r,t_r,s_b,t_b)$ be an instance of \cuts~such that $G=(V,E)$ is a planar 2-connected graph and $s_rs_bt_rt_b$ is a 4-cycle $C$. 
    It is obvious that {\textit{(3)}}{$\implies$}{\textit{(2)}}. Next, let us address {\textit{(1)}}{$\implies$}{\textit{(3)}}, by showing the contrapositive: suppose that there exists either a planar embedding $\phi_{in}$ of $G$ such that there is a triangle between $F$ and $C$ for every face $F$ inside $C$, or a planar embedding $\phi_{out}$ of $G$ such that there is a triangle between $F$ and $C$ for every face $F$ outside $C$. We note that the existence of $\phi_{in}$ and $\phi_{out}$ are actually equivalent, since any planar embedding admits a ``mirror'' planar embedding with respect to any given cycle, where the inside and the outside of that cycle are swapped. Therefore, we assume that $\phi_{in}$ exists. \autoref{lem:planar-sufficient} immediately concludes that $(G,s_r,t_r,s_b,t_b)$ is a NO-instance, so {\textit{(1)}}{$\implies$}{\textit{(3)}}. Finally, we show that {\textit{(2)}}{$\implies$}{\textit{(1)}}. 
    Let $\phi$ be a planar embedding of $G$ such that property $\cP(\phi)$ holds.
    
     Let $V_{in}$ (resp. $V_{out})$ denote the set of vertices of $G$ that are inside $C$ (resp. outside $C$) for $\phi$: we have $V_{in} \cup V_{out} = V$ and $V_{in} \cap V_{out} = C$. Note that $G[V_{in}]$ and $G[V_{out}]$ are also planar and 2-connected. Moreover, by planarity, $(G,s_r,t_r,s_b,t_b)$ is a YES-instance if and only if both $(G[V_{in}],s_r,t_r,s_b,t_b)$ and $(G[V_{out}],s_r,t_r,s_b,t_b)$ are YES-instances. 
    Let $\phi'$ be the mirror image of $\phi$ with respect to $C$. Since $\cP(\phi')$ also holds, it suffices to show that $(G[V_{in}],s_r,t_r,s_b,t_b)$ is a YES-instance.
    
    Let $F$ be a face of $G$ inside $C$ such that there is no triangle between $F$ and $C$. Recall that, in particular, this means $F$ is not a triangle. 
    Let $G_0$ be the graph obtained from $G[V_{in}]$ by:
    \begin{itemize}[nolistsep,noitemsep]
        \item[--] adding a new vertex $s$ strictly inside $F$ and edges from $s$ to all vertices of $F$;
        \item[--] adding a new vertex $t$ strictly outside $C$ and edges from $t$ to all four vertices of $C$.
    \end{itemize}

    Note that $G_0$ is still a planar 2-connected graph. Let $\phi_0$ be a planar embedding of $G_0$ that coincides with $\phi$ on the subgraph $G[V_{in}]$. \autoref{fig:G0phi} provides an illustration of the graphs $G_0$ and $G_0^{\phi_0}$.

    \begin{figure}[h]
    \centering
    \includegraphics[width=\linewidth]{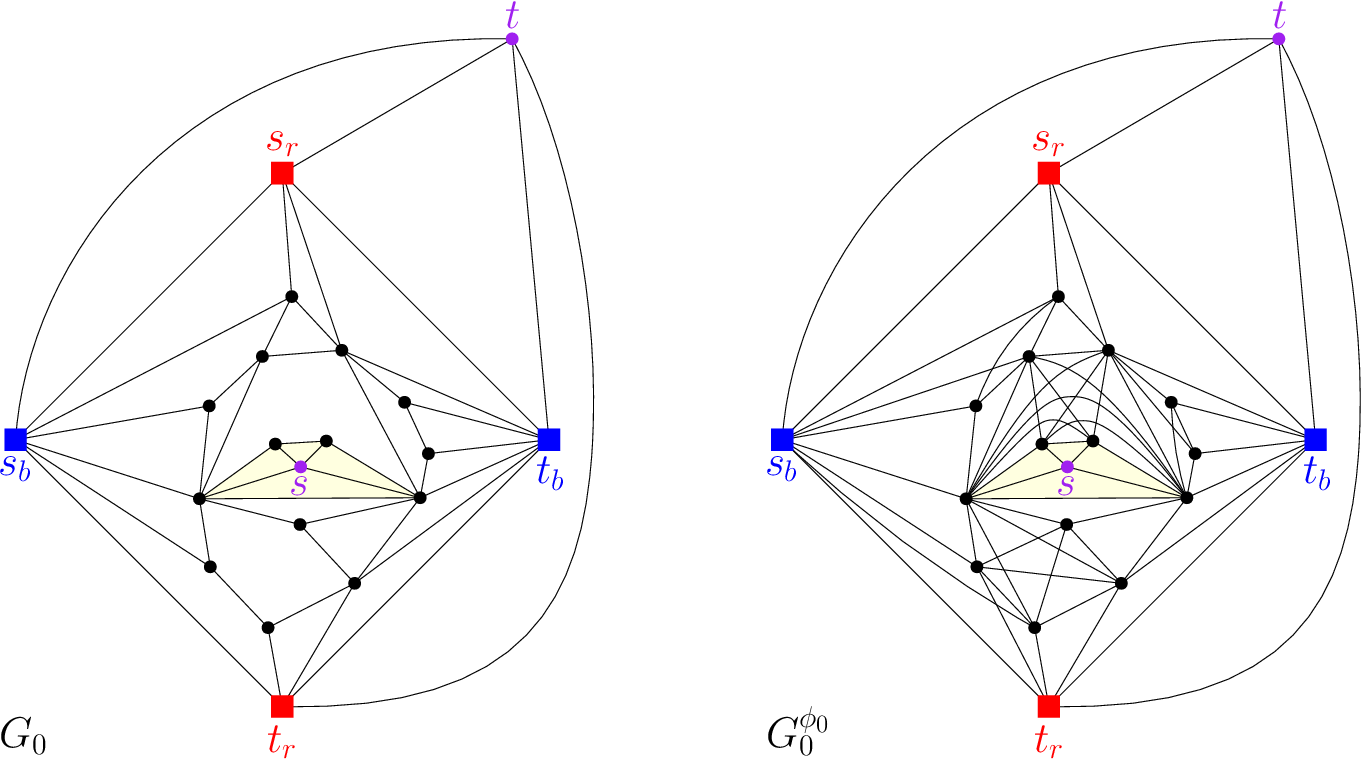}
    \caption{The graphs $G_0$ and $G_0^{\phi_0}$ on the example from \autoref{fig:exampleplanar}(b). The two pseudopaths from \autoref{fig:yes-instance1} can be obtained from four pairwise internally-vertex-disjoint $st$-paths in $G_0^{\phi_0}$.
    }
    \label{fig:G0phi}
    \end{figure}
    
    By our assumption on the face $F$, there is no triangle $T$ in $G_0$ such that one of $s$ or $t$ is strictly inside $T$ and the other is strictly outside $T$. Therefore, by \autoref{lem:completion}, there exist four pairwise internally-vertex-disjoint $st$-paths $P_1,P_2,P_3,P_4$ in $G_0^{\phi_0}$. Note that the edges incident to $t$ in $G_0^{\phi_0}$ are the same as in $G_0$, since $t$ was already a neighbor of all four vertices of $C$ in $G_0$. Without loss of generality, assume that the neighbor of $t$ in $P_1$ (resp. $P_2$, $P_3$, $P_4$) is $s_r$ (resp. $t_r$, $s_b$, $t_b$). 
    Let $P_r$ be the $s_rt_r$-path in $G_0^{\phi_0}$ obtained by taking $(P_1 \cup P_2)-t$ and replacing the two edges incident to $s$ by a single edge going across $F$. Similarly, let $P_b$ be the $s_bt_b$-path in $G_0^{\phi_0}$ obtained by taking $(P_3 \cup P_4)-t$ and replacing the two edges incident to $s$ by a single edge going across $F$. Let $X$ be the vertex set of $P_b$, and let $Y$ be the vertex set of $P_r$. Clearly, $X \cap Y = \varnothing$.
    
    To conclude, we show that $(X,Y)$ is an $(s_r,t_r,s_b,t_b)$-separator in $G[V_{in}]$. 
 Even though this fact is clear visually, let us give a rigorous proof. We only consider $X$, as the case of $Y$ is symmetric. Suppose for a contradiction that $X$ is not an $(s_r,t_r)$-separator in $G$, that is, there exists an $s_rt_r$-path $P$ in $G-X$. Let $G_1$ be the graph obtained from $G[V_{in}]$ by:
    \begin{itemize}[nolistsep,noitemsep]
        \item[--] adding all edges of $P_b$, so that $P_b$ is a path in $G_1$;
        \item[--] adding a new vertex $u$ strictly outside $C$ and edges from $u$ to all four vertices of $C$.
    \end{itemize}
    Note that $G_1$ is still a planar 2-connected graph. We claim that $G_1$ contains a subdivision of $K_5$ rooted on $\{u,s_r,t_r,s_b,t_b\}$. Indeed, $G_1$ possesses the following eight edges: $s_rs_b$, $s_bt_r$, $t_rt_b$, $t_bs_r$, $us_r$, $ut_r$, $us_b$, $ut_b$. Moreover, we have the $s_bt_b$-path $P_b$ in $G_1$, which contains neither $s_r$, $t_r$ nor $u$. Finally, we also have the $s_rt_r$-path $P$ in $G_1$, which is vertex-disjoint from $P_b$ and does not contain $u$ either. See \autoref{fig:kuratowski} for an illustration. This contradicts Kuratowski's theorem \cite{Kuratowski1930}, which states that a graph is planar if and only if it contains no subdivision of $K_5$ or $K_{3,3}$. 
\end{proof}

\begin{figure}[h]
    \centering
    \includegraphics[scale=0.5]{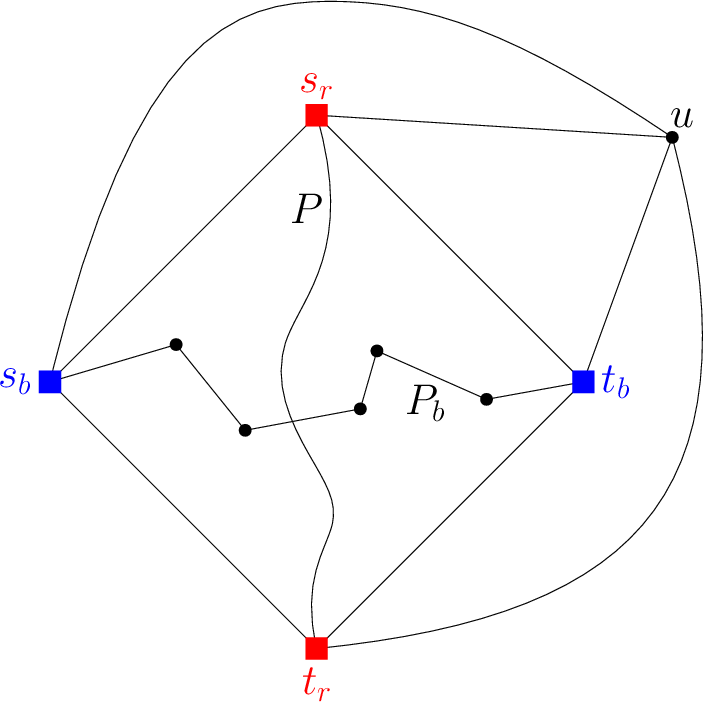}
    \caption{A subdivision of $K_5$ in the planar graph $G_1$, contradicting Kuratowski's theorem.
    }
    \label{fig:kuratowski}
\end{figure}

From the structural characterization provided by \autoref{thm:planar}, we can derive a polynomial-time algorithm.

\begin{corollary}\label{cor:planar}
    There is a quadratic-time algorithm which solves \cuts~on instances $(G,s_r,t_r,s_b,t_b)$ where $G$ is a planar graph and $s_rs_bt_rt_b$ is a 4-cycle.
\end{corollary}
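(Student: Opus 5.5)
The plan is to turn Theorem~\ref{thm:planar} into an algorithm: reduce to the 2-connected case, fix one planar embedding, and test property $\cP(\phi)$ face by face within the time budget.

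\textbf{Reduction to the 2-connected case.} Given $(G,s_r,t_r,s_b,t_b)$ with $G$ planar and $s_rs_bt_rt_b$ a 4-cycle, we first apply Proposition~\ref{prop:2-connected}, in quadratic time. It either solves the instance or returns an equivalent instance $(G_0,s_r,t_r,s_b,t_b)$ with the same four terminals, since $s_rs_bt_rt_b$ is a cycle. Here $G_0$ is a 2-connected induced subgraph of $G$, so it is planar and still contains the 4-cycle $C$.

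\textbf{Fixing an embedding.} We compute one planar embedding $\phi$ of $G_0$ in linear time, e.g.\ with Hopcroft--Tarjan or Boyer--Myrvold. By the equivalence (1)$\iff$(3) of Theorem~\ref{thm:planar}, it suffices to test $\cP(\phi)$ for this single embedding. We then determine which faces lie inside $C$ and which lie outside. This can be done by a traversal of the dual graph starting from a face adjacent to an edge of $C$, never crossing the edges of $C$; it takes linear time.

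\textbf{Testing the triangle condition.} For each face $F$ inside $C$ we must decide whether some triangle lies between $F$ and $C$. A triangle $T$ of $G_0$ inside $C$ is between $F$ and $C$ exactly when $F$ is inside $T$. So the inside faces that admit such a triangle are precisely the faces enclosed by some triangle lying inside $C$ (including triangular faces themselves).

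\begin{itemize}
\item A planar graph has $O(n)$ triangles, and they can be listed in linear time (Chiba--Nishizeki).
\item For each triangle $T$ inside $C$, we mark all faces inside $T$ by a dual traversal bounded by the three edges of $T$. Since there are $O(n)$ faces, this costs $O(n)$ per triangle, hence $O(n^2)$ overall.
\item $F_1$ exists iff some inside face remains unmarked.
\end{itemize}

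The same procedure applied to the faces outside $C$, using triangles outside $C$ and the region on the far side of $T$ from $C$, decides whether $F_2$ exists. We answer YES iff both exist.

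\textbf{Main obstacle.} The delicate step is correctly computing "inside $T$" relative to $C$, especially when $C$ is not the outer face or when triangles share vertices or edges with $C$. To handle this, I would re-embed $\phi$ so that a face outside $C$ is the outer face when treating the inside. The region enclosed by $T$ is then unambiguous, and by the mirror argument of Theorem~\ref{thm:planar} the outside case is symmetric. Every step is linear except the per-triangle marking and Proposition~\ref{prop:2-connected}, so the total running time is quadratic.
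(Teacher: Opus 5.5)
Your proposal is correct and is essentially the paper's proof. You reduce with Proposition~\ref{prop:2-connected} (the terminals are preserved because they form a cycle), fix one embedding, apply Theorem~\ref{thm:planar}, list the $O(n)$ triangles, and decide for each face on a given side of $C$ whether some triangle on that same side encloses it, for $O(n^2)$ total time. The fact that triangles on the other side of $C$ never matter, which you use without proof, is likewise left implicit in the paper.

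The only difference is how the enclosure test is implemented. The paper takes a straight-line embedding, deletes the vertices strictly outside $C$, and performs an $O(1)$ point-in-triangle check for each (face, triangle) pair. You instead mark all faces enclosed by each triangle with a dual traversal, which is purely combinatorial and avoids coordinates.
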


\begin{proof}
    Let $(G,s_r,t_r,s_b,t_b)$ be an instance of \cuts~where $G$ is a planar graph and $s_rs_bt_rt_b$ is a 4-cycle, which we name $C$. We may assume, by \autoref{prop:2-connected}, that $G$ is 2-connected up to a preliminary step in $O(n^2)$ time (indeed, since we reduce to an induced subgraph, planarity is preserved as well as the fact that $s_rs_bt_rt_b$ is a 4-cycle). Next, we fix a straight-line planar embedding $\phi$ of $G$, which can be computed in $O(n)$ time \cite{CHIBA198554}, and we enumerate the $O(n)$ faces and $O(n)$ triangles of $G$ in $O(n)$ time \cite{listtriangles}.
    
    By \autoref{thm:planar}, $(G,s_r,t_r,s_b,t_b)$ is a YES-instance if and only if there exist faces $F_1$ and $F_2$ (which are necessarily not triangles), the former inside $C$ and the latter outside $C$, such that there is no triangle between $F_1$ and $C$ and no triangle between $F_2$ and $C$. To check for the existence of $F_1$, we start by removing all the vertices that are strictly outside $C$, which can be done in $O(n)$ time using the coordinates of the vertices in the plane. For each remaining non-triangular face $F$ and triangle $T$, we proceed to check whether $F$ is inside $T$, which is done in $O(1)$ time for each given $(F,T)$: indeed, if $x$ is any vertex of $F$ that is not a vertex of $T$, then $F$ is inside $T$ if and only if $x$ is strictly inside $T$. In this way, the existence of $F_1$ can be checked in $O(n^2)$ time. The case of $F_2$ is addressed in similar fashion.
    
    We conclude that the algorithm runs in $O(n^2)$ time.
\end{proof}

\subsection{Polynomial-time algorithm in the general planar case}

We are now ready to prove this section's main result by reducing the general planar case of \cuts~to the subcase where the four terminals form a cycle.

\begin{theorem}\label{theo:algoplanar}
	There is a quadratic-time algorithm which solves \cuts~on instances $(G,s_r,t_r,s_b,t_b)$ where $G$ is a planar graph.
\end{theorem}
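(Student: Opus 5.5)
We reduce to \autoref{cor:planar} by adding the four edges $s_rs_b$, $s_bt_r$, $t_rt_b$, $t_bs_r$ (those not already present) while preserving planarity and the answer. By \autoref{prop:distance}, if any red terminal is at distance at least $3$ from any blue terminal we answer YES. So we may assume that each of the four pairs $(s_r,s_b)$, $(s_r,t_b)$, $(t_r,s_b)$, $(t_r,t_b)$ is at distance $1$ or $2$. We also first apply \autoref{prop:2-connected}, which in $O(n^2)$ time either solves the instance or yields an equivalent 2-connected planar induced subgraph.

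The key observation is that adding an edge $xy$ between a red terminal $x$ and a blue terminal $y$ never changes the answer. Such an edge is never monochromatic, and any monochromatic source-to-target path through it would use both a red and a blue vertex. So the instance with all four ``mixed'' edges added is equivalent to the original. The only issue is planarity: the augmented graph $G^+$ may fail to be planar.

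The main obstacle is handling the case where $G^+$ is not planar. My plan is to analyse which pairs can be made adjacent simultaneously.

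\textbf{Embedding step.} I would first try to show that if the instance is a YES-instance, or more precisely unless it is decided trivially, there is an embedding of $G$ in which $s_r,s_b,t_r,t_b$ appear in this cyclic order around some structure allowing the 4-cycle to be drawn. Equivalently, $G$ plus a new vertex $u$ adjacent to all four terminals should be planar, with the terminals appearing around $u$ in the alternating order $s_r,s_b,t_r,t_b$.

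\textbf{Non-alternating order.} If the terminals can only be placed on a common face in non-alternating order (red, red, blue, blue), then a disjoint red $s_rt_r$-path and blue $s_bt_b$-path can be drawn through that face. Hence, I expect, YES, with an explicit coloring: one red path and everything else blue, suitably chosen.

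\textbf{No common face.} If no face of any embedding contains all four terminals, I would use the distance-$\le 2$ structure, with common neighbours acting as connectors, to show that the terminals pairwise interlock. Then I would argue through \autoref{lem:completion}/Menger-type arguments that the instance is YES. Alternatively, I would search over the $O(n)$ faces and vertices to place the missing edges, subdividing through a common neighbour $w$ where needed: if $w$ is adjacent to $s_r$ and $s_b$, the path $s_rws_b$ together with a triangle-free argument can replace the edge.

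This case analysis is the step I expect to be hardest. What I would try first is a testing routine. Add $u$ adjacent to the four terminals and test planarity of $G+u$ with the rotation at $u$ forced to be alternating, using the $O(n)$ planarity tests. If this succeeds, route the four edges along the faces around $u$ and call \autoref{cor:planar}. If it fails, extract a Kuratowski obstruction, which I would argue forces either a disjoint red/blue path pair, giving YES via separators as in the proof of \autoref{thm:planar}, or a configuration handled by \autoref{prop:distance}.

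The total running time is dominated by \autoref{prop:2-connected} and \autoref{cor:planar}, giving $O(n^2)$.
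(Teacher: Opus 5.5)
\textbf{There is a genuine gap: the non-planar case of $G^+$, which is the whole difficulty, is never handled.}

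Your key observation is correct. An edge joining a red terminal to a blue terminal can never lie on a monochromatic path, so adding the mixed edges preserves the answer. But when $G^+$ is not planar, your proposal only lists hopes (``I expect'', ``I would argue''), and some of them are false.

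Consider the $8$-cycle $s_r\,a\,t_r\,b\,s_b\,c\,t_b\,d$ together with a vertex $x$ adjacent to $s_r,t_r,s_b,t_b$. This graph is a 2-connected subdivision of the wheel $W_4$, so its faces are uniquely determined. It has the following properties:
\begin{itemize}
\item every mixed pair is at distance $2$;
\item the only face containing all four terminals is the $8$-face, where they appear in the non-alternating order $s_r,t_r,s_b,t_b$, so your rotation-constrained test for $G+u$ fails;
\item $G^+$ contains a subdivision of $K_5$ on $\{x,s_r,t_r,s_b,t_b\}$, with $a$ and $c$ carrying the two diagonals;
\item $s_rat_r$ and $s_bct_b$ are vertex-disjoint paths.
\end{itemize}
Yet this is a NO-instance: whatever colour $x$ receives, $s_rxt_r$ or $s_bxt_b$ is monochromatic. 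So neither ``non-alternating order implies YES'' nor ``a Kuratowski obstruction yields YES or a distance-$3$ configuration'' holds. Vertex-disjoint paths give separators only once $s_rs_bt_rt_b$ is a $4$-cycle, which is exactly what the $K_5$ argument in the proof of \autoref{thm:planar} relies on. This particular example is caught by checking for a vertex adjacent to all four terminals. Your plan has no such step, and that check is the start of the missing idea.

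\textbf{The missing idea} is to create the $4$-cycle by \emph{contracting} edges instead of adding them, since contraction preserves planarity. The paper proceeds as follows.
\begin{itemize}
\item A non-terminal vertex adjacent to all four terminals gives NO. One adjacent to exactly three, say $s_r,t_r,s_b$, is forced blue. Being a blue neighbour of $s_b$, it acts as a blue source, so it can be contracted into $s_b$ by \autoref{prop:contraction}; afterwards one checks that $s_rt_r$ and $s_bt_b$ are still non-edges.
\item Once every non-terminal vertex sees at most two terminals, each non-adjacent mixed pair is at distance exactly $2$ (by \autoref{prop:distance}) and so has a common neighbour. These at most four intermediate vertices are pairwise distinct.
\item Brute-force their at most $16$ colourings. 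In each branch, every intermediate vertex has the colour of one of its two terminal neighbours, so it acts as that terminal and can be contracted into it. This makes the corresponding mixed pair adjacent.
\item Each branch is then a planar instance in which $s_rs_bt_rt_b$ is a $4$-cycle, solved in $O(n^2)$ time by \autoref{cor:planar}. If $s_rt_r$ or $s_bt_b$ has become an edge, that branch is immediately NO. The original instance is YES if and only if some branch is.
\end{itemize}
Your remark about routing through a common neighbour $w$ points in this direction. The step that works, however, is to fix $w$'s colour and contract it, not to argue that the path $s_rws_b$ can stand in for the edge.
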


\begin{proof}
	Let $(G,s_r,t_r,s_b,t_b)$ be an instance of \cuts~where $G$ is a planar graph. We start by explaining how, after a first step in $O(n^2)$ time, the problem is either solved or reduced to a planar instance having the following properties:
	\begin{enumerate}[label={(\arabic*)},noitemsep,nolistsep]
		\item $s_r$ and $t_r$ are not adjacent, and neither are $s_b$ and $t_b$; 
		\item $\dist_G(s_r,s_b) \leq 2$, $\dist_G(s_r,t_b) \leq 2$, $\dist_G(t_r,s_b) \leq 2$ and $\dist_G(t_r,t_b) \leq 2$;
		\item every non-terminal vertex is a neighbor of at most two of the four terminals.
	\end{enumerate}
By definition of the \cuts~problem, property (1) holds at the start. 
We can then check property (2) in $O(n)$ time, and we have a YES-instance by \autoref{prop:distance} if it does not hold. Now, suppose that some vertex $x$ is a neighbor of at least three terminals. If $x$ is a neighbor of all four terminals, then we have a trivial NO-instance. Otherwise, assume without loss of generality that $x$ is a neighbor of $s_r$, $t_r$ and $s_b$, but not $t_b$. Then, coloring $x$ in blue is forced and, since $x$ is a neighbor of $s_b$, $x$ behaves like a new blue source. Therefore, the instance $(G,s_r,t_r,s_b,t_b)$ is equivalent to the instance $(G,s_r,t_r,\{s_b,x\},t_b)$ of \generalcuts. By \autoref{prop:contraction}, we can then contract the edge $s_bx$ and get an equivalent instance of \cuts. Note that edge contractions preserve planarity. Whenever we find a vertex $x$ that is a neighbor of three terminals, we eliminate it in $O(n)$ time by performing an edge contraction, and then we check in $O(1)$ time that property (1) still holds (otherwise, we obviously conclude that we have a NO-instance). There is no need to check for property (2) again since edge contractions cannot increase distances. After some $O(n^2)$ time, we thus reduce to an instance of \cuts~which satisfies properties (1), (2) and (3).

We now start the second and final step of the algorithm, which is illustrated in \autoref{fig:4cycle}. Property (2) ensures that, for each of the four pairs $\{s_r,s_b\}$, $\{s_r,t_b\}$, $\{t_r,s_b\}$ and $\{t_r,t_b\}$, either the two vertices of the pair are adjacent or we can designate an ``intermediate vertex'' which is a neighbor of both. Moreover, by property (3), no two of these four pairs can share the same intermediate vertex. Therefore, the four terminals plus the intermediate vertices form a cycle of length between 4 and 8, where $s_r$, $s_b$, $t_r$ and $t_b$ appear in that order, some of them possibly separated by an intermediate vertex. Now, since there are at most four intermediate vertices, we can use brute force to determine their color and eliminate them in $O(n)$ time via edge contractions similarly to what was done in the first step of the algorithm. For instance, say $s_r$ and $s_b$ are not adjacent, and let $x$ be an intermediate vertex that is a neighbor of both: If we color $x$ in, say, red, then $x$ behaves as a new red source (as it is a red neighbor of the red source $s_r$) and so we may contract the edge $s_rx$. For each color combination of the intermediate vertices, and after the edge contractions, we get an instance in which $s_rs_bt_rt_b$ is a 4-cycle, which we solve by calling the algorithm in $O(n^2)$ time from \autoref{cor:planar}. Thus, the second step is performed in $O(n^2)$ time. 

In conclusion, both steps are achieved in $O(n^2)$ time, leading to the desired quadratic-time algorithm.
\end{proof}

\begin{figure}[h]
    \centering
    \includegraphics[scale=.64]{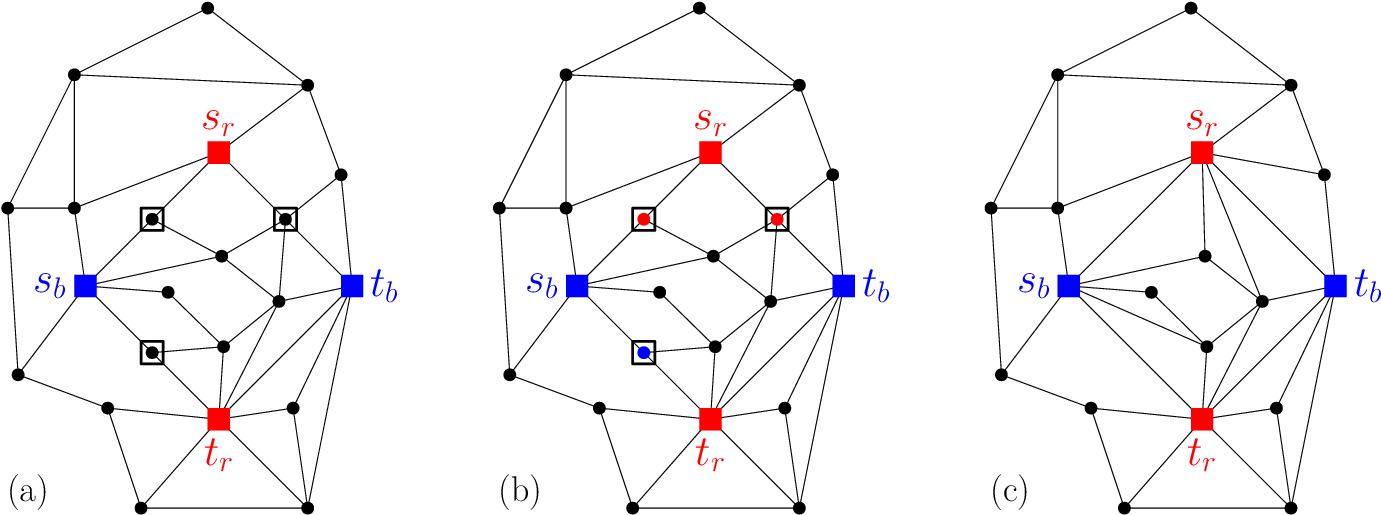}
    \caption{Illustration of the second step. (a) Initial situation, where the vertices squared in black are the intermediate vertices. (b) One of eight brute-force combinations for the colors of the intermediate vertices. (c) Three edge contractions yield the desired 4-cycle.}
    \label{fig:4cycle}
\end{figure}

\begin{corollary}
    There is a quadratic-time algorithm which solves \generalcuts~on instances $(G,S_r,T_r,S_b,T_b)$ where $G$ is a planar graph and each of the four sets $S_r$, $T_r$, $S_b$ and $T_b$ induces a connected subgraph.
\end{corollary}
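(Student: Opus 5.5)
The plan is to reduce to \autoref{theo:algoplanar} by contracting each of the four terminal sets to a single vertex, using \autoref{prop:contraction} repeatedly. Given an instance $(G,S_r,T_r,S_b,T_b)$ with $G$ planar and each of the four sets inducing a connected subgraph, first check in linear time that the sets are pairwise disjoint and that there is no edge between $S_r$ and $T_r$ and none between $S_b$ and $T_b$. These hold by \autoref{def:cuts}; if they fail we answer NO.

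Next, since $G[S_r]$ is connected, fix a spanning tree of $G[S_r]$ and contract its edges one at a time. Each contraction merges two adjacent vertices of the current red source set. By \autoref{prop:contraction}, each step yields an equivalent instance, and the new contracted vertex replaces the two merged vertices in $S_r$. After $|S_r|-1$ contractions, $S_r$ becomes a single vertex $s_r$. Doing the same for $T_r$, $S_b$ and $T_b$ produces an equivalent instance $(G',s_r,t_r,s_b,t_b)$.

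Three properties of $G'$ need checking. First, $G'$ is planar, since edge contractions preserve planarity; multi-edges can be simplified without changing the instance. Second, the four sets stay pairwise disjoint, because we only contract edges inside a single set. Third, non-adjacency of $s_r,t_r$ and of $s_b,t_b$ is preserved, because $s_r$ and $t_r$ are adjacent in $G'$ exactly when some edge joins $S_r$ and $T_r$ in $G$. So $G'$ is a valid planar instance of \cuts, and we solve it with the quadratic-time algorithm of \autoref{theo:algoplanar}.

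For the running time, all contractions together take $O(n^2)$ time if done naively, at $O(n)$ per contraction, or linear time if each set is collapsed in one pass by identifying it with a single vertex. The final call takes $O(n^2)$, so the total is quadratic.

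The only delicate point is the role of connectivity: contracting an edge is only justified by \autoref{prop:contraction} when both endpoints lie in the same terminal set. Connectivity of each induced subgraph guarantees we can always find such an edge until the set is a singleton, and this hypothesis is what prevents the reduction from breaking planarity, unlike the general reduction given earlier.
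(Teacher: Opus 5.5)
Your proposal is correct and follows the paper's proof: contract each connected terminal set to a single vertex by $O(n)$ edge contractions justified by \autoref{prop:contraction}, in $O(n^2)$ time, and then apply \autoref{theo:algoplanar}. Your explicit checks that planarity, disjointness of the terminal sets, and non-adjacency of $s_r,t_r$ and of $s_b,t_b$ survive the contractions are details the paper leaves implicit.
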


\begin{proof}
    We can perform $O(n)$ edge contractions to contract each of $S_r$, $T_r$, $S_b$ and $T_b$ into a single terminal, which does not change the nature of the instance by \autoref{prop:contraction}. After this step, which takes some $O(n^2)$ time, we apply \autoref{theo:algoplanar}.
\end{proof}

\section{Conclusion}\label{sec:conclusions}

\subsection{Summary}

In this work, we introduced and studied the disjoint separators problem, motivated by a natural generalization of the \Hex~game to arbitrary graphs. Our main contributions can be summarized as follows.

On the negative side, we prove that \generalcuts~is \NP-complete even under strong structural restrictions, including planar graphs of bounded maximum degree. This indicates that the problem remains computationally difficult even in sparse graphs with additional structural constraints.

On the positive side, we show that \cuts~is polynomial-time solvable on planar graphs. We give a structural characterization of YES-instances when the four terminals form a cycle, and we extend this approach to arbitrary planar instances by reducing them to this case. 

As a corollary, we obtain an almost tight complexity dichotomy: \cuts~is tractable on planar graphs, but it is \NP-hard on instances where removing the four terminals yields a planar graph of bounded maximum degree.

\subsection{Future work}

Several directions naturally arise from this work.

\begin{itemize}[leftmargin=*,noitemsep]
    \item \textbf{Graph classes:}
    A first direction is to extend the study beyond planar graphs. In particular, graphs of bounded genus appear as a natural next step, where topological obstructions may interact non-trivially with the separation constraints.

    \item \textbf{Minimization variants:}
    One may consider optimization versions of the problem, where, instead of deciding existence, one seeks disjoint separators optimizing some objective.  Natural objectives include minimizing the size of the larger of the two separators, minimizing the size difference between the two separators, or minimizing the size of their union. These variants may exhibit different complexity behavior and could be studied under standard structural restrictions such as bounded treewidth or bounded degeneracy.

    \item \textbf{Coinciding terminals:}
    It would be interesting to investigate special cases where terminals coincide or are highly correlated, for instance when $s_r = s_b$, or more generally when the sets of terminals are not pairwise disjoint.

    \item \textbf{Terminals-only variant:}
    A related setting arises when no distinguished sources and targets are specified. In this case, one asks whether a graph admits a red-blue coloring of the vertices such that there is no monochromatic path between any two terminals of the same color. This can be seen as a global avoidance version of the problem.

    \item \textbf{$k$-player generalizations:}
    Our problem admits two natural generalizations to an arbitrary number $k$ of pairs of terminals, with each pair having its own color. The first one is about finding pairwise disjoint separators of all the pairs. The second one consists in coloring the non-terminal vertices so that no monochromatic source-to-target path is created. Note that these two problems coincide for $k=2$, but become different for $k \geq 3$.
    
\end{itemize}

\section*{Acknowledgments}

The authors thank Pierre Guillon and K\'evin Perrot for inspiring the topic of this paper.

\bibliographystyle{abbrv}
\bibliography{references}

\end{document}